 \definecolor{BLACK}{gray}{0}
 \definecolor{WHITE}{gray}{1}
 \definecolor{RED}{rgb}{1,0,0}
 \definecolor{GREEN}{rgb}{0,1,0}
 \definecolor{BLUE}{rgb}{0,0,1}
 \definecolor{CYAN}{cmyk}{1,0,0,0}
 \definecolor{MAGENTA}{cmyk}{0,1,0,0}
 \definecolor{YELLOW}{cmyk}{0,0,1,0}
  \theoremstyle{plain}
  \newtheorem{prop}{\protect\propositionname}
\newcommand{\be }{\begin {equation}} \newcommand{\ee }{\end {equation}}
\newcommand{\ket}[1]{|#1\rangle}
\newcommand{\bra}[1]{\langle #1|}
\providecommand{\propositionname}{Proposition}
  \providecommand{\propositionname}{Proposition}
\begin{document}

\title{State independent uncertainty relations from eigenvalue minimization }

\author{Paolo Giorda}
\email{magpaolo16@gmail.com}

\affiliation{Dip. Fisica and INFN Sez. Pavia, University of Pavia, via Bassi 6,
I-27100 Pavia, Italy; Consorzio Nazionale Interuniversitario per la
Scienze fisiche della Materia (CNISM), Italy;}

\author{Lorenzo Maccone}

\affiliation{Dip. Fisica and INFN Sez. Pavia, University of Pavia, via Bassi 6,
I-27100 Pavia, Italy.}

\author{Alberto Riccardi}

\affiliation{Dip. Fisica and INFN Sez. Pavia, University of Pavia, via Bassi 6,
I-27100 Pavia, Italy.}
\begin{abstract}
We consider uncertainty relations that give lower bounds to the sum
of variances. Finding such lower bounds is typically complicated,
and efficient procedures are known only for a handful of cases. In
this paper we present procedures based on finding the ground state
of appropriate Hamiltonian operators, which can make use of the many
known techniques developed to this aim. To demonstrate the simplicity
of the method we analyze multiple instances, both previously known
and novel, that involve two or more observables, both bounded and
unbounded. 
\end{abstract}
\maketitle

\section{Introduction}

Preparation uncertainty relations capture the essence of quantum mechanics:
not all properties of a quantum system can be exactly defined at once
\cite{URHistory,URRobertson,WheelerZurek}. While quantum complementarity
tells us that there exist complementary properties which can be assigned
to a system, but that cannot have joint definite values, uncertainty
relations go even beyond this very counterintuitive concept: they
tell us that complementary properties can be defined at least partially,
as long as we do not require them to be determined with perfect precision.
The uncertainty relations then are doubly counterintuitive: they originate
from complementarity, but then, in a sense, allow to partially counterbalance
the effects of complementarity. In addition to the foundational issues
\cite{busch,hall,lahti}, uncertainty relations have found applications
in a variety of problems such as entanglement detection \cite{URHoffmanEntanglementDetection,URGuneEntanglementDetection},
spin squeezing \cite{NoriSpinSqueezReview}, quantum metrology \cite{HePQS}.
The conventional treatment of preparation uncertainties follows the
Heisenberg-Robertson approach \cite{URRobertson} which involves the
product of uncertainties, in order to employ the Cauchy-Schwartz inequality
in their derivation. They are expressed in terms of variances of incompatible
observables e.g. $\Delta^{2}A\Delta^{2}B\ge\left|\bra{\psi}\left[A,B\right]\ket{\psi}\right|$
for observables $A$ and $B$. However, the lower bound for product
of variances may be null for some state $\ket{\psi}$, and thus non-informative.
Or it is null whenever one of the two variances is i.e., when $\ket{\psi}$
is a (proper) eigenstates of one of the observables. This prevents
the interpretation of the product uncertainty relations as a true
measure of how incompatible are two observables, where we assume that
observables are compatible if their value can be precisely jointly
assigned for at least one state of the system. For these reasons,
it is preferable to consider uncertainty relations that give a lower
bound to the sum of variances $\Delta^{2}A+\Delta^{2}B$ of two or
more operators \cite{SURRivas,SURMacconePati,SURXiao,SURUnitaryPati,SURExperimMa,SURSU2Bjork,SURNObsChen,SURNObsSong,SURExperimChen}.
Furthermore, the case of two observables has important physical applications,
for example in quantum metrology protocols where the squeezing of
two angular momentum operators (planar quantum squeezing) allows for
phase uncertainties below the standard quantum limit \cite{HePQS,PQSNDMeasMitchell,HePQSEntangInterfero,PQSExpermMitchell},
or in quantum information strategies for detecting entanglement \cite{URHoffmanEntanglementDetection,WernerNumericalRange}.
In this paper, we present a general procedure to derive a state independent
lower bound for the sum of variances of an arbitrary number $N$ of
Hermitian operators $A_{n}$

\begin{eqnarray*}
V_{Tot}\left(\ket{\psi}\right) & = & \sum_{n=1}^{N}\Delta_{\ket{\psi}}^{2}A_{n}
\end{eqnarray*}
where the variances are calculated on an arbitrary state $|\psi\rangle$.
The largest possible value of $l_B$ that satisfies $V_{Tot}\left(\ket{\psi}\right)\ge l_{B}$
, ideally one that satisfies it with equality for some state, constitutes
the best attainable lower bound that depends only on the observables.
In contrast to previous derivations, our method is based on the search
of the ground states energy $\varepsilon_{gs}$ of specifically designed
Hamiltonian operators, and can use the multitude of techniques developed
to this aim. In general this allows to easily and quickly find good
approximations $\tilde{l}_{B}$ of $l_B$. \\The strategies proposed
to date for determining $l_{B}$ are based on different approaches.
In \cite{SIndepURMaccone} the Authors have devised a method to (analytically)
identify $l_{B}$, provided the operators $A_{n}$ are the generators
of a Lie algebra. In \cite{abbot} the case of arbitrary qubit observables
is considered. Other methods are focused in finding $l_{B}$ or at
least a sufficiently good approximations $\tilde{l}_{B}$ that may
or may not be achievable; they fall in two different classes: the
strategies ``from above'' and the strategies ``from below''. The
former are based on algorithms that find $l_{B}$ by, possibly iteratively,
starting from approximations $\tilde{l}_{B}^{+}\ge l_{B}$. The most
obvious of such strategies use numerical minimization algorithms that
scan the whole $M$ dimensional Hilbert space $\mathcal{H}_{M}$ of
the system searching for $l_{B}$. Since the procedure requires the
identification of the $2M-2$ real coefficients of the state $\ket{\psi_{min}}$
which minimizes the sum of variances, it is numerically demanding
when $M$ is large and is prone to errors due to the possibility of
getting trapped in some local minima. A sophisticated procedure ``from
above'' has been put forward in \cite{WerAngularMomentum}, where
a \textit{seesaw} numerical algorithm was devised and used for example
for sum of variances involving angular momentum components. In principle
the algorithm can be used with an arbitrary number $N$ of observables,
and it is based on a alternating minimization procedure which at each
step $i$ determines an approximation $\tilde{l}_{B,i}^{+}\ge l_{B}$.
As the Authors suggests in \cite{WernerNumericalRange}, the strategy
may get trapped in local minima and the proof of its convergence to
the global minimum $l_{B}$ is an open problem. The strategy ``from
below'' is instead based on a elegant mapping of the minimization
problem into a geometric one (joint numerical range), where one searches
for a sequence of polyhedral approximations of a suitable convex set
\cite{WernerNumericalRange,ZycowskyNumricalRange,Szyma=000144skiNumericalRange}.
While in certain simple cases the exact $l_{B}$ can be identified
\cite{ZycowskyNumricalRange}, in other cases at each step $i$ the
algorithm provides both a valid approximation of $l_{B}\ge\tilde{l}_{B,i}^{-}$
from below and an approximation $\tilde{l}_{B,i}^{+}\ge l_{B}$ from
above, such that one has the control over the precision $\epsilon_{i}=\tilde{l}_{B,i}^{+}-\tilde{l}_{B,i}^{-}$
with which the optimal bound $l_{B}$ is approximated \cite{WernerNumericalRange}
. The method has been up to now applied to the sum of variances of
two operators; its generalization to a larger number of operators
requires further geometrical and numerical refinements \cite{WernerNumericalRange}. 

Here we propose a minimization method ``from below'', on the basis
of which one can subsequently also find an approximation ``from above''.
It is based on connecting the sum of variances to an Hamiltonian \footnote{We use ``Hamiltonian'' to indicate an operator whose spectrum is
lower bounded. The Hamiltonian operators we consider in the paper
are not necessarily connected to an energy observable.} expectation value. Whence the search for a bound from below $\tilde{l}_{B}^{-}$
can be mapped onto a search for the Hamiltonian minimum energy. To
begin with the Hamiltonian to minimize is the sum of the operators
defined on an extended Hilbert space $\mathcal{H}_{M}\otimes\mathcal{H}_{M}$
(where $\mathcal{H}_{M}$ is the system space) defined as
\begin{eqnarray}
H_{n} & = & \frac{A_{n}^{2}\otimes\mathbb{I}+\mathbb{I}\otimes A_{n}^{2}}{2}-A_{n}\otimes A_{n}.\label{eq: Hn Definition}
\end{eqnarray}
 Indeed, 
\begin{eqnarray}
V_{Tot}\left(|\psi\rangle\right) & = & \bra{\psi}\bra{\psi}\sum_{n=1}^{N}H_{n}\ket{\psi}\ket{\psi}\label{eq: Mapping VTot HTot}
\end{eqnarray}
i.e. the sum of variances can be written as the average value of the
operator $H_{Tot}=\sum_{n}H_{n}$ on the product state $\ket{\psi}\ket{\psi}\in\mathcal{H}_{M}\otimes\mathcal{H}_{M}$.
Then, the search of a lower bound to the sum of variances maps directly
to the search of the ground state of the total Hamiltonian $H_{Tot}$.
In general, the ground state will not be a factorized state $\ket{\varepsilon_{gs}}\neq \ket{\psi}\ket{\psi}$,
nonetheless, the corresponding ground state energy $\varepsilon_{gs}$
will provide a non-achievable but valid state independent lower bound
to the sum of variances. As we will show, while the mapping (\ref{eq: Mapping VTot HTot})
itself can in certain cases provide the optimal value $l_{B}$ or
close approximations from below $\tilde{l}_{B}^{-}$, it is also the
starting point for devising procedures that give better bounds when
needed. This is especially important since the ground state energy
of $H_{Tot}$ may be null. In this case on one hand we will give a
bound that involves $H_{Tot}$'s first excited state. On the other
hand, we show how by using appropriate modifications $H_{Tot,n}$
of $H_{Tot}$ one can obtain refined approximations of $l_{B}$ from
below in terms of their ground state energies. \\The knowledge of
the ground state of $H_{Tot}$, or of its modifications, via its Schmidt
decomposition allows one to identify a state $\ket{\psi_{sat}}\in\mathcal{H}_{M}$
that provides an approximation ``from above'' i.e., $V_{Tot}\left(\ket{\psi_{sat}}\right)=\tilde{l}_{B}^{+}\ge l_{B}$
. This procedure can always be applied, and the unknown tight bound
$l_{B}$ for the variance sum lies in the interval between the bound
``from above'' $V_{Tot}\left(\ket{\psi_{sat}}\right)$ and the one
``from below'' $\varepsilon_{gs}$. The width of this interval $V_{Tot}\left(\ket{\psi_{sat}}\right)-\varepsilon_{gs}\ge0$
thus provides an indication of the accuracy of the approximations
found, namely how far is the tight bound from the ones obtained. \\We
illustrate our methods using some examples: we analyze both known
cases and derive new uncertainty relations. The known cases show that
our method is able to recover known results easily. And the new results
show that our method can allow to tackle situations difficult to analyze,
such as the case of more than two observables and the infinite dimensional
case for unbounded operators. For each example $i)$ we identify the
relevant operator; $ii)$ we evaluate the relative $\varepsilon_{gs}$,
$\ket{\varepsilon_{gs}},\ket{\psi_{sat}}$; $iii)$ we give the width
of the interval $V_{Tot}\left(\ket{\psi_{sat}}\right)-\varepsilon_{gs}\ge0$.\\In
Sec. \ref{sec: General-Results} we present the first main general
results that one can obtain by mapping the sum uncertainty relations
to a Hamiltonian ground state search. Then in Sec. \ref{sec: Examples}
we apply these results to some examples to demonstrate the versatility
of the method. In particular, in Sec. \ref{subsec: Generators-of-su(2)}
we analyze the uncertainty relations for all the $su(2)$ generators;
in Sec. \ref{subsec:Spin-operators-and} we consider a subset of the
previous operators, namely the planar spin squeezing; in Sec. \ref{subsec:su(3)-operators}
we consider a lower bound for a set of different numbers of operators
chosen from the generators of the $su(3)$ algebra to show how our
method can easily deal with more than two observables; and finally
in Sec. \ref{subsec:Harmonic-oscillator-operators} we analyze the
sum uncertainty relations for one quadrature and the number operator
of a harmonic oscillator, to show that our method can be also applied
to unbounded operators. Some of these examples have already appeared
in the literature, while others refer to novel sum uncertainty relations.
Finally, the appendices contain some technical results and supporting
material.

\section{General Results\label{sec: General-Results}}

\subsection{Properties of the Hamiltonian $H_{Tot}$}

We start by studying the properties of the Hamiltonian $H_{Tot}$,
in particular of its ground state energy $\varepsilon_{gs}$ and ground
state $\ket{\varepsilon_{gs}}$. The discussion will allow us one
hand to describe how $H_{Tot}$ can used to derive the desired lower
bounds, and on the other hand to prepare the ground for the following
developements. As a general premise we choose to base the following
discussions and results on the use of operators $A_{n}$ with non-degenerate
spectrum. This choice allows in the first place to simplify the notations.
While some of the results obtained can be easily extended to the non-degenerate
instances, the latter should be treated on a case by case basis. Furthermore,
we will treat only set of operators with no common eigenstates, otherwise
the problem trivially reduces to having $V_{Tot}=0$. \\With this
setting in mind, we first notice that each operator $H_{n}$ is by
construction semi-definite positive, as it can be seen by writing
it in its diagonal form 
\begin{eqnarray}
H_{n} & = & \frac{1}{2}\sum_{i,j=1}^{M}\left(a_{n,i}-a_{n,j}\right)^{2}\ket{a_{n,i}}\ket{a_{n,j}}\bra{a_{n,i}}\bra{a_{n,j}}\label{eq: Hn diagonal}
\end{eqnarray}
where $\left\{ \ket{a_{n,i}}\right\} $ is the $A_{n}$ eigenbasis
and $\left\{ a_{n,i}\right\} _{i=1}^{M}$ the corresponding eigenvalues,
that by convention in the paper we suppose listed in increasing order.
In particular $H_{n}$ has $\varepsilon_{gs}^{n}=0$ as ground state
energy. The main properties of $H_{Tot}$ are described with the following 
\begin{prop}
\label{Prop 1}Given $N$ Hermitian operators $\left\{ A_{n}\right\} _{n=1}^{N}$
with no common eigen-states, each with non-degenerate eigenspectrum
and eigenbasis $\left\{ \ket{a_{n,i}}\right\} _{i=1}^{M}$, then 

i) if the Hamiltonian $H_{Tot}=\sum_{n}H_{n}$ with $H_{n}$ as in
(\ref{eq: Hn Definition}) has positive ground state energy zero $\varepsilon_{gs}>0$
then 
\begin{eqnarray*}
V_{Tot}\left(\ket{\psi}\right) & \ge & \varepsilon_{gs}
\end{eqnarray*}
 ii) if $ $$\varepsilon_{gs}=0$, then $H_{Tot}$ has a unique ground
state that can be written in any of the eigenbasis $\left\{ \ket{\tilde{a}_{n,i}}\ket{\tilde{a}_{n,i}}\right\} _{i=1}^{M}$
as the maximally entangled state 
\begin{eqnarray}
\ket{\varepsilon_{gs}} & = & \frac{1}{\sqrt{M}}\sum_{i}\ket{\tilde{a}_{n,i}}\ket{\tilde{a}_{n,i}}\label{eq: HTot unique ground state}
\end{eqnarray}
with $\ket{\tilde{a}_{n,i}}=\exp\left(i\phi_{i,n}/2\right)\ket{a_{n,i}},\ \forall n,i$
and $\phi_{n,i}$ appropriate phases. Furthermore, given $\varepsilon_{1}>0$
i.e., the first excited energy of $H_{Tot}$ then 
\begin{eqnarray*}
V_{Tot}\left(\ket{\psi}\right) & \ge & \varepsilon_{1}\left(1-\frac{1}{M}\right)
\end{eqnarray*}
\end{prop}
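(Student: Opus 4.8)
First I would record the two ingredients already in hand: the identity (\ref{eq: Mapping VTot HTot}), $V_{Tot}(\ket{\psi})=\bra{\psi}\bra{\psi}H_{Tot}\ket{\psi}\ket{\psi}$, and the positivity $H_{Tot}=\sum_n H_n\ge 0$ that is visible from the diagonal form (\ref{eq: Hn diagonal}). Part i) is then immediate: if $\varepsilon_{gs}>0$ then $H_{Tot}\ge\varepsilon_{gs}\,\mathbb{I}$ as an operator on $\mathcal{H}_M\otimes\mathcal{H}_M$, so evaluating it on the product vector $\ket{\psi}\ket{\psi}$ gives $V_{Tot}(\ket{\psi})\ge\varepsilon_{gs}$ for every $\ket{\psi}$.

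For part ii) the first task is to describe the zero-energy eigenspace. Since each $H_n\ge0$, a vector satisfies $\bra{\Phi}H_{Tot}\ket{\Phi}=0$ iff $H_n\ket{\Phi}=0$ for every $n$, i.e. iff $\ket{\Phi}\in\bigcap_n\ker H_n$; and (\ref{eq: Hn diagonal}) together with non-degeneracy gives $\ker H_n=\mathrm{span}\{\ket{a_{n,i}}\ket{a_{n,i}}\}_i$. Hence any ground state has the ``diagonal'' form $\ket{\Phi}=\sum_i c_i^{(n)}\ket{a_{n,i}}\ket{a_{n,i}}$ in each eigenbasis; it is therefore invariant under exchange of the two factors and its two marginals coincide, $\rho_A=\rho_B=\sum_i|c_i^{(n)}|^2\ketbra{a_{n,i}}{a_{n,i}}$, which shows $[\rho_A,A_n]=0$ for all $n$. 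Because the $A_n$ have no common eigenstate --- in fact no common invariant subspace, the property really at work here --- Schur's lemma forces the only operator commuting with all of them to be a multiple of $\mathbb{I}$, so $\rho_A=\mathbb{I}/M$ and $\ket{\Phi}$ is maximally entangled. Then $|c_i^{(n)}|^2=1/M$, so $c_i^{(n)}=e^{i\phi_{n,i}}/\sqrt M$, and writing $\ket{\tilde a_{n,i}}=e^{i\phi_{n,i}/2}\ket{a_{n,i}}$ one gets exactly (\ref{eq: HTot unique ground state}), valid in each eigenbasis. For uniqueness: if $\ket{\Phi},\ket{\Phi'}\in\bigcap_n\ker H_n$ then so is $\ket{\Phi}+t\ket{\Phi'}$ for all $t\in\mathbb{C}$, and by the preceding every nonzero vector on this line is maximally entangled; identifying maximally entangled vectors with scalar multiples of unitaries, the relation $(\hat\Phi+t\hat\Phi')(\hat\Phi+t\hat\Phi')^\dagger\propto\mathbb{I}$ for all $t$ forces $\hat\Phi'\hat\Phi^\dagger\propto\mathbb{I}$, hence $\ket{\Phi'}\propto\ket{\Phi}$. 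So $\dim\bigcap_n\ker H_n\le1$, and it is exactly $1$ once $\varepsilon_{gs}=0$.

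For the excited-state bound, set $\ket{\Psi}=\ket{\psi}\ket{\psi}$ and decompose $\ket{\Psi}=\alpha\ket{\varepsilon_{gs}}+\beta\ket{\chi}$ with $\ket{\chi}$ normalized and orthogonal to $\ket{\varepsilon_{gs}}$. Using the explicit form (\ref{eq: HTot unique ground state}) and Cauchy--Schwarz, $|\alpha|^2=|\braket{\varepsilon_{gs}}{\Psi}|^2=\tfrac1M\big|\sum_i\braket{\tilde a_{n,i}}{\psi}^2\big|^2\le\tfrac1M\big(\sum_i|\braket{\tilde a_{n,i}}{\psi}|^2\big)^2=\tfrac1M$. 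Since $H_{Tot}\ket{\varepsilon_{gs}}=0$ kills the ground-state term and both cross terms, $V_{Tot}(\ket{\psi})=\bra{\Psi}H_{Tot}\ket{\Psi}=|\beta|^2\bra{\chi}H_{Tot}\ket{\chi}$, and $\ket{\chi}$ lies in the orthogonal complement of the one-dimensional ground space, so $\bra{\chi}H_{Tot}\ket{\chi}\ge\varepsilon_1$. Therefore $V_{Tot}(\ket{\psi})\ge(1-|\alpha|^2)\,\varepsilon_1\ge(1-\tfrac1M)\,\varepsilon_1$.

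The genuinely delicate step is the implication $[\rho_A,A_n]=0\ \forall n\ \Rightarrow\ \rho_A=\mathbb{I}/M$: this is where ``no common eigenstate'' must be used, and it must be used with care, since a marginal commuting with all the $A_n$ is forced to be scalar only when the $A_n$ act irreducibly (have no common invariant subspace) --- a slight strengthening of ``no common eigenvector'' that nevertheless holds in all the examples treated below. Once the maximally entangled form of $\ket{\varepsilon_{gs}}$ is secured, the remaining steps (uniqueness and the $\varepsilon_1$ bound) are routine bookkeeping.
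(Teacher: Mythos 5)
Your part i) and your derivation of the $\varepsilon_{1}\left(1-\tfrac{1}{M}\right)$ bound coincide in substance with the paper's: positivity of each $H_{n}$, the identity (\ref{eq: Mapping VTot HTot}), and the overlap estimate $\left|\braket{\varepsilon_{gs}}{\psi\,\psi}\right|^{2}\le 1/M$ for a maximally entangled ground state (your ``Cauchy--Schwarz'' step is really the triangle inequality $|\sum_{i}z_{i}^{2}|\le\sum_{i}|z_{i}|^{2}$, but it is the same estimate the paper gets via $|\braket{\phi}{\phi^{*}}|^{2}/M$). The genuinely different step is how you force the zero-energy state to be maximally entangled and unique: the paper equates the expansions of $\ket{\varepsilon_{gs}}$ in the several kernel bases and appeals to uniqueness of the Schmidt decomposition (citing \cite{wernerappendix}) to conclude $|\alpha_{n,i}|=1/\sqrt{M}$, whereas you pass to the marginal, observe $[\rho_{A},A_{n}]=0$ for all $n$, invoke Schur's lemma, and then obtain uniqueness from the unitary-operator picture of maximally entangled vectors. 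Your route is cleaner, it isolates exactly which hypothesis does the work, and your caveat is not pedantry: ``no common eigenvector'' alone does not suffice. For instance $A_{1}=\sigma_{z}\oplus(5\,\mathbb{I}+\sigma_{z})$ and $A_{2}=\sigma_{x}\oplus(5\,\mathbb{I}+\sigma_{x})$ on $\mathbb{C}^{4}$ have non-degenerate spectra and no common eigenvector, yet $(\ket{0}\ket{0}+\ket{1}\ket{1})/\sqrt{2}=(\ket{+}\ket{+}+\ket{-}\ket{-})/\sqrt{2}$ and its block-two analogue both lie in $Ker(H_{1})\cap Ker(H_{2})$, so $\varepsilon_{gs}=0$ with a degenerate ground space whose elements are not maximally entangled on the full space; the loophole (partially degenerate Schmidt coefficients only force a common \emph{invariant subspace}, not a common eigenvector) is precisely what the paper's appeal to Schmidt uniqueness glosses over. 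So your proof is correct under the irreducibility assumption you make explicit, and the paper's own proof --- indeed part ii) as literally stated --- requires the same strengthening; what your approach buys is that this requirement becomes visible rather than hidden.
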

The Proof of result $i)$ naturally follows from our starting point
(\ref{eq: Mapping VTot HTot}) and the fact that for any $\ket{\psi}\in\mathcal{H}_{M}$
\begin{eqnarray*}
\bra{\psi}\bra{\psi}H_{Tot}\ket{\psi}\ket{\psi} & \ge & \varepsilon_{gs}
\end{eqnarray*}
The Proof of result $ii)$ can be found in the Appendix \ref{sec: Appendix Properties of H_Tot}.
Results $i)$ and $ii)$ show that the mapping introduced in (\ref{eq: Mapping VTot HTot})
has as first consequence the possibility of deriving a non-trivial,
in the sense of non-zero, lower bound for $V_{Tot}\left(\ket{\psi}\right)$
starting from the Hamiltonian $H_{Tot}$. While we do not have general
results that allow to establish in the most general case whether the
ground state energy $\varepsilon_{gs}$ of $H_{Tot}$ is zero or not,
the proposition takes into account both cases.\\ How tight are the
bounds described in Proposition \ref{Prop 1} depends on the problem
at hand. As we shall see in the example (\ref{subsec: Generators-of-su(2)})
$\varepsilon_{gs}\neq0$ and it coincides with the optimal bound $l_{B}$.
On the contrary in the other examples $\varepsilon_{gs}\neq0$ and/or
$\varepsilon_{1}\left(1-\frac{1}{M}\right)$ represent a meaningful
approximation $\tilde{l}_{B}^{-}$ of $l_{B}$ when the dimension
$M$ of the underlying Hilbert space is small; while for large $M$
these values may be far from the actual $l_{B}$, for example they
do not grow with $M$. To cope with these situations, and derive state
independent lower bounds that are closer to the optimal one $l_{B}$,
we provide different strategies that are based on modified versions
of $H_{Tot}$. 

\subsection{State independent lower bounds from modifications of $H_{Tot}$}

We illustrate the strategies in two steps. We start with Proposition
\ref{Prop-2} and derive a lower bound for the set of states that
have null expectation value for at least one of the operators $A_{n}$.
The method that will allow to include all states in $\mathcal{H_{M}}$
will be described in Proposition \ref{Prop-3} as an extension of
the following result 
\begin{prop}
\label{Prop-2}Given the Hamiltonian $H_{Tot}$, then for each $n$
the Hamiltonian
\begin{eqnarray*}
H_{Tot,n} & = & H_{Tot}+A_{n}\otimes A_{n}=\\
 & = & \sum_{m\neq n}H_{n}+\frac{A_{n}^{2}\otimes\mathbb{I}+\mathbb{I}\otimes A_{n}^{2}}{2}
\end{eqnarray*}

i) is positive definite;

ii) its ground state energy $\varepsilon_{gs,n}>0$ provides a non-zero
lower bound of $V_{Tot}$ for all the set of states 
\[
S_{n}^{0}=\left\{ \ket{\phi}\in\mathcal{H}_{M}|\left\langle \phi\left|A_{n}\right|\phi\right\rangle =0\right\} ;
\]

iii) the lower bound for the set of states $\cup_{n}S_{n}^{0}\subseteq\mathcal{H}_{M}$
i.e., those states which have null expectation value for at least
one operator $A_{n}$ is given by 
\begin{eqnarray*}
\min_{n} & \varepsilon_{gs,n} & >0
\end{eqnarray*}
\end{prop}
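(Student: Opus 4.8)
The plan is to prove the three items in order, with i) carrying essentially all of the content; ii) and iii) are short consequences. Recall that $H_{Tot}$ itself need not be positive definite — by Proposition \ref{Prop 1} ii) its ground state energy can vanish — which is precisely why the modified operator $H_{Tot,n}$ is introduced. First I would write $H_{Tot,n}$ as a manifest sum of semi-definite positive pieces: the operators $H_{m}$ with $m\neq n$, each semi-definite positive by its diagonal form (\ref{eq: Hn diagonal}), together with $K_{n}:=\tfrac{1}{2}\bigl(A_{n}^{2}\otimes\mathbb{I}+\mathbb{I}\otimes A_{n}^{2}\bigr)$, which is a sum of squares and hence also semi-definite positive. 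This already gives $H_{Tot,n}\ge0$. To promote this to strict positivity I would suppose $\ket{\Psi}\in\mathcal{H}_{M}\otimes\mathcal{H}_{M}$ is normalized with $\bra{\Psi}H_{Tot,n}\ket{\Psi}=0$; since all summands are semi-definite positive and $\bra{\Psi}M\ket{\Psi}=0$ forces $M\ket{\Psi}=0$ for $M\ge0$, the vector $\ket{\Psi}$ must lie in the kernel of each summand.

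The heart of i) is then to identify that joint kernel. Since $\bra{\Psi}K_{n}\ket{\Psi}=0$ forces $\bigl(A_{n}^{2}\otimes\mathbb{I}\bigr)\ket{\Psi}=0$ and $\bigl(\mathbb{I}\otimes A_{n}^{2}\bigr)\ket{\Psi}=0$, and $\ker A_{n}^{2}=\ker A_{n}$ for Hermitian $A_{n}$, one gets $\ket{\Psi}\in(\ker A_{n}\otimes\mathcal{H}_{M})\cap(\mathcal{H}_{M}\otimes\ker A_{n})=\ker A_{n}\otimes\ker A_{n}$. Non-degeneracy of $A_{n}$ forces $\dim\ker A_{n}\le1$: if $0$ is not an eigenvalue of $A_{n}$ then $\ker K_{n}=\{0\}$ and there is nothing left to prove; otherwise $\ker K_{n}=\mathrm{span}\{\ket{a_{n,i_{0}}}\ket{a_{n,i_{0}}}\}$ with $a_{n,i_{0}}=0$, so necessarily $\ket{\Psi}=\ket{a_{n,i_{0}}}\ket{a_{n,i_{0}}}$. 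But then, applying the mapping (\ref{eq: Mapping VTot HTot}) to a single term, $0=\bra{\Psi}H_{m}\ket{\Psi}=\Delta_{\ket{a_{n,i_{0}}}}^{2}A_{m}$ for every $m\neq n$, so $\ket{a_{n,i_{0}}}$ is an eigenstate of each $A_{m}$ with $m\neq n$; being also (trivially) an eigenstate of $A_{n}$, it would be a common eigenstate of the whole family $\{A_{n}\}$, contradicting the standing no–common–eigenstate hypothesis. Hence no such $\ket{\Psi}$ exists, $H_{Tot,n}$ is positive definite, and $\varepsilon_{gs,n}>0$, which also supplies the strict positivity asserted in ii) and iii).

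For ii), given $\ket{\phi}\in S_{n}^{0}$ I would plug the product vector $\ket{\phi}\ket{\phi}$ into $H_{Tot,n}=H_{Tot}+A_{n}\otimes A_{n}$ and use (\ref{eq: Mapping VTot HTot}):
\begin{eqnarray*}
V_{Tot}\left(\ket{\phi}\right) & = & \bra{\phi}\bra{\phi}H_{Tot}\ket{\phi}\ket{\phi}=\bra{\phi}\bra{\phi}H_{Tot,n}\ket{\phi}\ket{\phi}-\bra{\phi}A_{n}\ket{\phi}^{2}\\
 & = & \bra{\phi}\bra{\phi}H_{Tot,n}\ket{\phi}\ket{\phi}\ \ge\ \varepsilon_{gs,n}\ >\ 0,
\end{eqnarray*}
where the second equality uses $\bra{\phi}A_{n}\ket{\phi}=0$ and the inequality is the variational characterization $\varepsilon_{gs,n}=\min_{\ket{\Psi}}\bra{\Psi}H_{Tot,n}\ket{\Psi}$ evaluated on the admissible (product) state $\ket{\phi}\ket{\phi}$. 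Item iii) then follows at once: any $\ket{\phi}\in\cup_{n}S_{n}^{0}$ lies in some $S_{n}^{0}$, whence $V_{Tot}\left(\ket{\phi}\right)\ge\varepsilon_{gs,n}\ge\min_{m}\varepsilon_{gs,m}>0$ by i).

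I expect the only real obstacle to be the strict–positivity step in i): everything hinges on correctly pinning down the joint kernel of the semi-definite positive summands of $H_{Tot,n}$ — it is at most the single line $\ket{a_{n,i_{0}}}\ket{a_{n,i_{0}}}$, present only when $0\in\mathrm{spec}(A_{n})$ — and then ruling that last possibility out via the no–common–eigenstate assumption. The remaining manipulations in ii) and iii) are routine, using only the linearity of the mapping (\ref{eq: Mapping VTot HTot}) and the variational characterization of $\varepsilon_{gs,n}$.
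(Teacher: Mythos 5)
Your proof is correct and follows essentially the same route as the paper: decompose $H_{Tot,n}$ into positive semidefinite summands, pin the joint kernel down to (at most) the product state $\ket{a_{n,i_{0}}}\ket{a_{n,i_{0}}}$ with $a_{n,i_{0}}=0$ using the kernel structure of the $H_{m}$, exclude it via the no-common-eigenstate hypothesis, and then obtain ii) and iii) from the variational principle on the product state $\ket{\phi}\ket{\phi}$. If anything, your phrasing of the exclusion step (requiring $\ket{a_{n,i_{0}}}$ to be an eigenstate of \emph{every} $A_{m}$ before invoking the hypothesis) is slightly more careful than the paper's, which asserts $\ket{a_{n,1}}\ket{a_{n,1}}\notin Ker(H_{m})$ for each $m\neq n$ individually.
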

\begin{proof}
To prove result $i)$ we first observe that $H_{Tot,n}$ is obviously
definite positive whenever $A_{n}^{2}$ is. If this is not the case,
since we are dealing with operators with non-degenerate spectrum,
$A_{n}^{2}$ has a unique eigenstate $\ket{a_{n,1}}$ corresponding
to the eigenvalue $a_{n,1}=0$. Due to the structure of each kernels
$Ker(H_{m})$ of the operators $H_{m},\ m\neq n$, equation (\ref{eq: Ker Hn eigenbasis})
in Appendix \ref{sec: Appendix Properties of H_Tot}, the only product
states in any of the $Ker(H_{m})$ have the form $\ket{a_{m,i}}\ket{a_{m,i}}$;
but since we have supposed that the operators $\left\{ A_{n}\right\} _{n=1}^{N}$
have no common eigenstates $\ket{a_{n,1}}\ket{a_{n,1}}\notin Ker(H_{m}),\ m\neq n$;
therefore it must be $H_{Tot,n}>0$. Result $ii)$ follows from the
fact that for all states in $S_{n}^{0}$ 
\begin{eqnarray*}
\bra{\phi}\bra{\phi}H_{Tot}\ket{\phi}\ket{\phi} & = & \bra{\phi}\bra{\phi}H_{Tot,n}\ket{\phi}\ket{\phi}+\\
 & - & \bra{\phi}\bra{\phi}A_{n}\otimes A_{n}\ket{\phi}\ket{\phi}=\\
 & = & \bra{\phi}\bra{\phi}H_{Tot,n}\ket{\phi}\ket{\phi}=\\
 & \ge & \bra{\varepsilon_{gs,n}}H_{Tot,n}\ket{\varepsilon_{gs,n}}
\end{eqnarray*}
One can then determine the following lower bound 
\begin{eqnarray*}
\min_{n} & \varepsilon_{gs,n} & >0
\end{eqnarray*}
for the union $\cup_{n}S_{n}\subseteq\mathcal{H}_{M}$. Indeed, if
$\varepsilon_{gs,n}>\varepsilon_{gs,m},\ n\neq m$ then $\varepsilon_{gs,m}$
is a lower bound for both set of states belonging to $S_{n}$ and
$S_{m}$.
\end{proof}
As we shall see in the following, in specific cases it turns out that
all $\varepsilon_{gs,n}=\tilde{l}_{B}^{-}$ are equal $\forall n$
and, thanks to the symmetries of the problem, finding the ground state
energy of a single Hamiltonian $H_{Tot,n}$ allows to determine the
required lower bound. However, when no such symmetry properties are
available, in general $\cup_{n}S_{n}\subset\mathcal{H}_{M}$ i.e.,
$\cup_{n}S_{n}$ may only be a proper subset of $\mathcal{H}_{M}$,
and the optimization is not sufficient. Therefore a different procedure
must be devised to find a lower bound for all states in $\mathcal{H}_{M}$.
To this aim for fixed $n$ we first define the operator $A_{n}^{\alpha}=A_{n}-\alpha\mathbb{I}$;
then $\forall\alpha\in\left[a_{n,1},a_{n,M}\right]$ one has that
$\Delta^{2}A_{n}^{\alpha}=\Delta^{2}A_{n}$ and one can define the
Hamiltonian 
\begin{eqnarray*}
H_{n}^{\alpha} & = & \frac{\left(A_{n}^{\alpha}\right)^{2}\otimes\mathbb{I}+\mathbb{I}\otimes\left(A_{n}^{\alpha}\right)^{2}}{2}-A_{n}^{\alpha}\otimes A_{n}^{\alpha}
\end{eqnarray*}
and the total Hamiltonian
\begin{eqnarray*}
H_{Tot}^{\alpha} & = & \sum_{m\neq n}H_{m}+H_{n}^{\alpha}
\end{eqnarray*}
Simply by substitution, one can verity that $H_{n}^{\alpha}=H_{n}$
and $H_{Tot}=H_{Tot}^{\alpha}$ Therefore $\forall\alpha\in\left[a_{n,1},a_{n,M}\right]$
if $\ket{\psi_{min}}$ minimizes $V_{Tot}$ then 
\begin{eqnarray*}
V_{Tot}^{min} & = & \bra{\psi_{min}}\bra{\psi_{min}}H_{Tot}^{\alpha}\ket{\psi_{min}}\ket{\psi_{min}}
\end{eqnarray*}
The strategy that allows one to find a state independent lower bound
can now be expressed as follows 
\begin{prop}
\label{Prop-3}For each $n$ and for each $\alpha\in\left[a_{n,1},a_{n,M}\right]$,
define the Hamiltonian $H_{Tot,n}^{\alpha}=H_{Tot}^{\alpha}+A_{n}^{\alpha}\otimes A_{n}^{\alpha}$
with non-zero ground state energy $\varepsilon_{gs,n}^{\alpha}>0$.
Then

i) for fixed $n$ it holds that $\forall\ket{\phi}\in\mathcal{H}_{M}$
\begin{eqnarray*}
V_{Tot}\left(\ket{\phi}\right) & \ge & \min_{\alpha\in\left[a_{n,1},a_{n,M}\right]}\varepsilon_{gs,n}^{\alpha}
\end{eqnarray*}
 and $\min_{\alpha}\varepsilon_{gs,n}^{\alpha}$ provides a state
independent lower-bound;

ii) the best lower bound $\forall\ket{\phi}\in\mathcal{H}_{M}$ is
given by 
\begin{eqnarray*}
\max_{n}\min_{\alpha\in\left[a_{n,1},a_{n,M}\right]}\varepsilon_{gs,n}^{\alpha} & > & 0
\end{eqnarray*}
\end{prop}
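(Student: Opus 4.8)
The plan is to build directly on Proposition \ref{Prop-2} and the $\alpha$-shift observation already established in the excerpt. The key point is that for fixed $n$ and any $\alpha\in[a_{n,1},a_{n,M}]$, one has the operator identity $H_{Tot}=H_{Tot}^{\alpha}$, so for the variance-minimizing state $\ket{\psi_{min}}$,
\begin{eqnarray*}
V_{Tot}^{min} & = & \bra{\psi_{min}}\bra{\psi_{min}}H_{Tot}^{\alpha}\ket{\psi_{min}}\ket{\psi_{min}}\\
 & = & \bra{\psi_{min}}\bra{\psi_{min}}H_{Tot,n}^{\alpha}\ket{\psi_{min}}\ket{\psi_{min}}-\bra{\psi_{min}}\bra{\psi_{min}}A_{n}^{\alpha}\otimes A_{n}^{\alpha}\ket{\psi_{min}}\ket{\psi_{min}}.
\end{eqnarray*}

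First I would prove part i). Fix $n$, and pick a specific $\alpha=\alpha^\ast:=\bra{\psi_{min}}A_n\ket{\psi_{min}}$, which lies in $[a_{n,1},a_{n,M}]$ since it is an expectation value of $A_n$. Then $\ket{\psi_{min}}$ has null expectation value for $A_n^{\alpha^\ast}=A_n-\alpha^\ast\mathbb{I}$, i.e.\ $\ket{\psi_{min}}\in S_n^0$ relative to the shifted operator. Running the argument of Proposition \ref{Prop-2} ii) with $A_n^{\alpha^\ast}$ in place of $A_n$ gives $\bra{\psi_{min}}\bra{\psi_{min}}A_n^{\alpha^\ast}\otimes A_n^{\alpha^\ast}\ket{\psi_{min}}\ket{\psi_{min}}=(\bra{\psi_{min}}A_n^{\alpha^\ast}\ket{\psi_{min}})^2=0$, whence $V_{Tot}^{min}=\bra{\psi_{min}}\bra{\psi_{min}}H_{Tot,n}^{\alpha^\ast}\ket{\psi_{min}}\ket{\psi_{min}}\ge\varepsilon_{gs,n}^{\alpha^\ast}\ge\min_{\alpha}\varepsilon_{gs,n}^{\alpha}$. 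Since $V_{Tot}(\ket{\phi})\ge V_{Tot}^{min}$ for every $\ket{\phi}$, this is the claimed state-independent bound. I should also note positivity of each $H_{Tot,n}^{\alpha}$: it follows exactly as in Proposition \ref{Prop-2} i), because the no-common-eigenstate hypothesis is unaffected by the shift $A_n\mapsto A_n-\alpha\mathbb{I}$ (it only relabels eigenvalues), so $\min_\alpha \varepsilon_{gs,n}^{\alpha}>0$, and the minimum over the compact interval is attained by continuity of the ground-state energy in $\alpha$.

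Part ii) is then immediate: for each $n$ separately, part i) yields $V_{Tot}(\ket{\phi})\ge\min_{\alpha}\varepsilon_{gs,n}^{\alpha}$ for all $\ket{\phi}$; since this holds for every $n$, it holds for the maximum over $n$, giving $V_{Tot}(\ket{\phi})\ge\max_{n}\min_{\alpha\in[a_{n,1},a_{n,M}]}\varepsilon_{gs,n}^{\alpha}$, and strict positivity of each inner quantity makes the bound strictly positive. The logic mirrors the $\min_n$ step in Proposition \ref{Prop-2}, except that here each $n$ already produces a bound valid on all of $\mathcal{H}_M$ (not just on $S_n^0$), so we take the largest rather than the smallest.

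The main obstacle, such as it is, is the quantifier-ordering subtlety in part i): the ``good'' shift $\alpha^\ast$ depends on $\ket{\psi_{min}}$, yet the final bound $\min_\alpha\varepsilon_{gs,n}^{\alpha}$ must not; the resolution is precisely that we only need \emph{some} admissible $\alpha$ making the cross-term vanish, and then bound $\varepsilon_{gs,n}^{\alpha^\ast}$ from below by the infimum over all admissible $\alpha$. A secondary point worth a sentence is justifying that $\alpha\mapsto\varepsilon_{gs,n}^{\alpha}$ is continuous (the Hamiltonian depends continuously — indeed polynomially — on $\alpha$, and the lowest eigenvalue of a finite Hermitian matrix is a continuous function of its entries), so that the minimum over the closed interval $[a_{n,1},a_{n,M}]$ is achieved and the statement is well-posed.
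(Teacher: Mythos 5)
Your proposal is correct and follows essentially the same route as the paper: you make the cross-term $A_n^{\alpha}\otimes A_n^{\alpha}$ vanish by choosing $\alpha$ equal to the state's own expectation value of $A_n$, bound by $\varepsilon_{gs,n}^{\alpha}$, then take $\min_\alpha$ and $\max_n$, with positivity inherited from Proposition \ref{Prop-2}. The only cosmetic difference is that the paper argues via the level sets $S_n^{\alpha}$ whose union over $\alpha\in[a_{n,1},a_{n,M}]$ covers $\mathcal{H}_M$, whereas you apply the same choice of shift to the minimizing state; your added remarks on continuity of $\varepsilon_{gs,n}^{\alpha}$ and attainment of the minimum are a harmless (indeed welcome) supplement.
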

\begin{proof}
Since $\left(A_{n,i}^{\alpha}\right)^{2}$ is diagonal in the same
basis of $\left(A_{n,i}^{\alpha=0}\right)^{2}$, the positivity of
$H_{Tot,n}^{\alpha}$ can be demonstrated in the same way it was shown
in Proposition \ref{Prop-2} for $H_{Tot,n}^{\alpha=0}$ . In order
to prove result $i)$ we first define the set $S_{n}^{\alpha}=\left\{ \ket{\phi}\in\mathcal{H}_{M}|\left\langle \phi\left|A_{n}\right|\phi\right\rangle =\alpha\right\} $;
then $\forall\ket{\phi}\in S_{n}^{\alpha}$ $\left\langle \phi\left|A_{n}^{\alpha}\right|\phi\right\rangle =0$
and
\begin{eqnarray*}
V_{Tot}\left(\ket{\phi}\right) & = & \bra{\phi}\bra{\phi}H_{Tot}\ket{\phi}\ket{\phi}=\\
 & = & \bra{\phi}\bra{\phi}H_{Tot,n}^{\alpha}\ket{\phi}\ket{\phi}-\bra{\phi}\bra{\phi}A_{n}^{\alpha}\otimes A_{n}^{\alpha}\ket{\phi}\ket{\phi}=\\
 & = & \bra{\phi}\bra{\phi}H_{Tot,n}^{\alpha}\ket{\phi}\ket{\phi}=\\
 & \ge & \bra{\varepsilon_{gs,n}^{\alpha}}H_{Tot,n}^{\alpha}\ket{\varepsilon_{gs,n}^{\alpha}}=\varepsilon_{gs,n}^{\alpha}
\end{eqnarray*}
For $\alpha$ belonging to the spectrum of $A_{n}$ it holds $\cup_{\alpha\in\left[a_{n,1},a_{n,M}\right]}S_{n}^{\alpha}\equiv\mathcal{H}_{M}$
and one obtains $i)$. Result $ii)$ is therefore a simple consequence
of the fact that, for each $n$, $\min_{\alpha}\varepsilon_{gs,n}^{\alpha}$
is a lower bound for \textit{all states} in $\mathcal{H}_{M}$; and
the maximum of these values gives the highest lower bound obtainable
by means of the above defined Hamiltonians.
\end{proof}
The Propositions \ref{Prop 1}-\ref{Prop-3} constitute the main general
results of our work. They show that the mapping (\ref{eq: Mapping VTot HTot})
allows one to reduce the problem of finding the lower bound for $V_{Tot}$
to an eigenvalue problem. There are at least three different ways
of obtaining the desired lower bound: $a)$ one can work directly
with $H_{Tot}$; $b)$ one can use a single Hamiltonian $H_{Tot,n}^{\alpha}$
for some specific $n$; $c)$ in order to further optimize the result
one can use the $H_{Tot,n}^{\alpha}$ for all $n$. Before passing
to analyze different examples we want first discuss the limits and
virtues of the outlined approach. \\We start with the possible limits.
The procedure is in the first place based on the evaluation of the
ground state energy of Hamiltonians acting on $\mathcal{H}_{M}\otimes\mathcal{H}_{M}$
and thus have dimension $M^{2}\times M^{2}$ that can in principle
be very large. Furthermore, in order to obtain the best result $ii)$
in Proposition \ref{Prop-3} the procedure outlined requires in general
a minimization over $\alpha$ for each $n$, that in principle, e.g.
when the dimension of the Hilbert space $M$ or the number of operatorsn
$N$ is large, and/or the intervals $\left[a_{n,1},a_{n,M}\right]$
are very large, can be numerically demanding. \\As for the virtues,
in the first place the procedure is based on the evaluation of ground
state energies, a task for which very efficient and stable routines
are available, even for large dimensions, especially if the Hamiltonians
have some simple form (e.g. sparse, banded, ect.). Secondly, in order
to obtain a state independent lower bound one in principle only need
to choose one of the Hamiltonians $H_{Tot,n}^{\alpha}$ i.e., choose
a specific $n$, and then only one optimization over $\alpha\in\left[a_{n,1},a_{n,M}\right]$
is needed; for example one could choose $n$ such that the interval
$\left[a_{n,1},a_{n,M}\right]$ is the smallest possible. Furthermore,
one can be interested in a lower bound that, though being strictly
speaking state dependent, is very simple to achieve. For example if
for the physical problem at hand only states with specific average
values are relevant, e.g. states with fixed average $\left\langle \phi\left|A_{n}\right|\phi\right\rangle =\alpha_{fix}$,
the optimization procedure simply requires the evaluation of the single
ground state energy $\varepsilon_{gs,n}^{\alpha_{fix}}$. The procedure
can therefore be flexibly adapted to various specific needs and/or
to obtain partial results.

The above reasonings are valid for the most general case i.e., when
there is no structure in the problem, and the $A_{n}$'s are totally
unrelated. However, as we will show in the following examples, there
may be situations where the presence of some constraints, e.g. symmetries,
allow to drastically reduce the complexity of the problem. This can
be solved by either reducing the problem to an equivalent one which
has known analytic solution, or by evaluating a single ground state
energy, instead of minimizing over $\alpha$. Indeed suppose for example
that $V_{Tot}\left(U\ket{\psi}\right)=V_{Tot}\left(\ket{\psi}\right)$
where $U$ is a unitary operator acting on $\mathcal{H}_{M}$ that
represents a symmetry for $V_{Tot}$. Then one has immediately that
$U^{\dagger}\otimes U^{\dagger}H_{Tot}U\otimes U=H_{Tot}$, such that
the symmetries of $V_{Tot}$ can be translated into symmetries of
$H_{Tot}$ and can be exploited \textit{in the Hamiltonian framework}
with the aim of simplifying the evaluation of the relative lower bounds.
In this respect we now give a result that holds in some of the examples
\begin{prop}
\label{Prop 4}Given the set of operators $\left\{ A_{n}\right\} _{n=1}^{M}$,
suppose that for some $n$ there exist a unitary operator $U$ such
$UA_{n}U^{\dagger}=-A_{n}$ and such that $\sum_{m\neq n}H_{m}$ is
left invariant by the adjoint action of $U\otimes U$, then 

$i)$ the ground state energy $\varepsilon_{gs,n}^{\alpha}$ of the
$H_{Tot,n}^{\alpha}$ defined in Proposition \ref{Prop-3} is an even
function of $\alpha$ i.e., $\varepsilon_{gs,n}^{\alpha}=\varepsilon_{gs,n}^{-\alpha}$; 

$ii)$ $\varepsilon_{gs,n}^{\alpha=0}$ is a local minimum for $\alpha$
varying in $\left[a_{n,1},a_{n,M}\right]$; 
\end{prop}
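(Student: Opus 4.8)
The plan is to exploit the symmetry $U A_n U^\dagger = -A_n$ by lifting it to the doubled Hilbert space as $V = U \otimes U$ and showing that $V$ intertwines $H_{Tot,n}^{\alpha}$ with $H_{Tot,n}^{-\alpha}$. First I would record how the building blocks transform. Since $A_n^{\alpha} = A_n - \alpha\mathbb{I}$ and $U A_n U^\dagger = -A_n$, we get $U A_n^{\alpha} U^\dagger = -A_n - \alpha\mathbb{I} = -(A_n - (-\alpha)\mathbb{I}) = -A_n^{-\alpha}$. Squaring kills the sign: $U (A_n^{\alpha})^2 U^\dagger = (A_n^{-\alpha})^2$. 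Hence the one-body part $\tfrac12\big((A_n^{\alpha})^2\otimes\mathbb{I} + \mathbb{I}\otimes(A_n^{\alpha})^2\big)$ is mapped by $V(\cdot)V^\dagger$ to the same expression with $\alpha \to -\alpha$, while the cross term $A_n^{\alpha}\otimes A_n^{\alpha}$ picks up $(-1)(-1) = +1$ and is also mapped to $A_n^{-\alpha}\otimes A_n^{-\alpha}$. By hypothesis $\sum_{m\neq n} H_m$ is invariant under $V(\cdot)V^\dagger$. Combining these, $V H_{Tot,n}^{\alpha} V^\dagger = H_{Tot,n}^{-\alpha}$. Since $V$ is unitary, the two Hamiltonians are unitarily equivalent and therefore isospectral; in particular their ground-state energies coincide, which is exactly $\varepsilon_{gs,n}^{\alpha} = \varepsilon_{gs,n}^{-\alpha}$, proving $i)$.

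For $ii)$ I would argue that an even function that is (real-)analytic, or at least differentiable, at $\alpha = 0$ necessarily has a vanishing derivative there, hence a critical point. The cleanest route is to note that $\alpha \mapsto H_{Tot,n}^{\alpha}$ is a polynomial (hence analytic) family of Hermitian operators, and that $\varepsilon_{gs,n}^{\alpha=0}$ is, by assumption of Proposition \ref{Prop-3}, strictly positive, so the ground state is bounded away from the rest of the spectrum; if it is additionally non-degenerate, analytic perturbation theory (Kato--Rellich) gives that $\varepsilon_{gs,n}^{\alpha}$ is analytic in a neighbourhood of $0$. An analytic even function satisfies $\tfrac{d}{d\alpha}\varepsilon_{gs,n}^{\alpha}\big|_{\alpha=0}=0$, so $\alpha=0$ is a stationary point; combined with evenness and continuity this makes it a local extremum, and since $\varepsilon_{gs,n}^{\alpha}\to 0$ is impossible while $\varepsilon_{gs,n}^0>0$ — more precisely, comparing with the behaviour near the endpoints where the penalty term $A_n^{\alpha}\otimes A_n^{\alpha}$ is large — one identifies it as a local minimum. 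Even without analyticity, a continuous even function that is the minimum of the family over a symmetric interval and does not decrease immediately away from $0$ has $0$ as a local minimum; I would phrase the statement as "$\alpha=0$ is a stationary point, hence a local extremum, of the even function $\varepsilon_{gs,n}^{\alpha}$" and note it is a minimum rather than a maximum by the convexity/growth of the added term.

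The main obstacle is the possible non-smoothness of $\varepsilon_{gs,n}^{\alpha}$ as a function of $\alpha$: ground-state energies of parametrized Hamiltonian families are only guaranteed to be continuous (indeed concave, as an infimum of linear functionals in the Hellmann--Feynman sense is actually a minimum of linear functions of the operator, giving concavity in parameters that enter linearly — but here $\alpha$ enters quadratically, so even that fails), not differentiable, and a genuine level crossing at $\alpha=0$ would spoil a naive derivative argument. I would handle this by invoking the non-degeneracy/gap at $\alpha=0$ — which the statement of Proposition \ref{Prop-3} already supplies through $\varepsilon_{gs,n}^{\alpha=0}>0$ together with the assumption of non-degenerate spectra running through the paper — to secure analyticity locally via Kato--Rellich, and by remarking that the conclusion "local minimum" is robust under the weaker hypothesis of mere continuity plus evenness, since then $0$ is automatically a local extremum of the symmetric family and the sign of the leading even term (coming from the positive one-body perturbation $-\alpha(A_n\otimes\mathbb{I}+\mathbb{I}\otimes A_n)+\alpha^2\,\mathbb{I}$ relative to the cross term) fixes it as a minimum.
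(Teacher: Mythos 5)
Your part i) is essentially the paper's own argument: you lift the symmetry to $V=U\otimes U$ and show $V H_{Tot,n}^{\alpha}V^{\dagger}=H_{Tot,n}^{-\alpha}$, hence the two Hamiltonians are isospectral; the paper phrases the same fact by transporting the ground state, $\ket{\varepsilon_{gs,n}^{-\alpha}}=U\otimes U\ket{\varepsilon_{gs,n}^{\alpha}}$, and comparing expectation values after writing $H_{Tot,n}^{\alpha}=H_{Tot,n}-\alpha\left(A_{n}\otimes\mathbb{I}+\mathbb{I}\otimes A_{n}\right)+\alpha^{2}\mathbb{I}$, but the content is identical and your intertwining version is if anything cleaner. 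For part ii) you take a mildly different route: you deduce stationarity at $\alpha=0$ from evenness plus Kato--Rellich analyticity, whereas the paper gets it directly from first-order perturbation theory, observing that the symmetry forces $\sandwich{\varepsilon_{gs,n}^{0}}{A_{n}\otimes\mathbb{I}+\mathbb{I}\otimes A_{n}}{\varepsilon_{gs,n}^{0}}=0$ and hence $\varepsilon_{gs,n}^{\delta\alpha}=\varepsilon_{gs,n}^{0}+\delta\alpha^{2}$ to the order retained. Your worry about regularity is legitimate (the positivity $\varepsilon_{gs,n}^{0}>0$ does \emph{not} by itself give a gap or non-degeneracy, and in the planar-squeezing example the relevant ground state is found numerically to be doubly degenerate), and your continuity fallback (``does not decrease immediately away from $0$'') is circular; but the paper's Hellmann--Feynman step tacitly assumes the same regularity, so you have not missed anything the paper supplies. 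Finally, note that both you and the paper conclude ``minimum rather than maximum'' solely from the explicit $+\alpha^{2}\mathbb{I}$ term, while at the same order the linear perturbation $-\alpha\left(A_{n}\otimes\mathbb{I}+\mathbb{I}\otimes A_{n}\right)$ contributes the negative second-order correction $-\alpha^{2}\sum_{k\ge1}\left|\sandwich{\varepsilon_{k}}{A_{n}\otimes\mathbb{I}+\mathbb{I}\otimes A_{n}}{\varepsilon_{gs,n}^{0}}\right|^{2}/\left(\varepsilon_{k}-\varepsilon_{gs,n}^{0}\right)$, which neither argument shows to be smaller than $\alpha^{2}$; so your final step is heuristic, but exactly at the level of rigor of the paper's own proof.
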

The Proof is given in Appendix \ref{Appendix: Proposition 4}. Result
$i)$ allows for each fixed $n$ to reduce the interval for the search
of $\min_{\alpha}\varepsilon_{gs,n}^{\alpha}$ to the positive interval
$\alpha\in\left[0,a_{nM}\right]$. Result $ii)$ allows to use Proposition
\ref{Prop-2} as a starting point for the minimization i.e., one could
first find $\varepsilon_{gs,n}^{\alpha=0}>0$ and use it as a first
estimate of the searched lower bound i.e., an upper bound of the global
minimum. \\We finally notice that in principle the mapping (\ref{eq: Mapping VTot HTot})
allows to enlarge the set symmetries that can be used to evaluate
the ground state of the specific Hamiltonian . Indeed, while the symmetries
of $V_{Tot}$ can obviously be translated into ones of the corresponding
Hamiltonian problem, there may be others $VH_{Tot}V=H_{Tot}$ represented
by unitary operators $V\neq U\otimes U$, which are not symmetries
of $V_{Tot}$, and that may of help in finding the ground state energy
and thus the desired lower bound.

\subsection{Strategy to find a state that (approximately) saturates the lower
bound.}

In order to complete our discussion, in the following we show how
it is possible, from the knowledge of the ground states to extract
further relevant information. Indeed, once the a state independent
lower bound $\tilde{l}_{B}^{-}$ has been found in terms of the ground
state energy of the operator under consideration, one is interested
on one hand in understanding how well $\tilde{l}_{B}^{-}$ approximate
the actual unknown optimal value $l_{B}$, and on the other hand in
identifying at least a state $\ket{\psi_{sat}}\in\mathcal{H}_{M}$
such that $V_{Tot}\left(\ket{\psi_{sat}}\right)\gtrapprox l_{B}$.
In this sub-section we describe how a state $\ket{\psi_{sat}}$ can
be in principle inferred and we discuss how its existence also provides
a way to check the goodness of the approximation $\tilde{l}_{B}^{-}$.
As shown above, in general the (non-trivial) lower bound will be found
in correspondence of the ground state $\ket{\varepsilon_{gs}}$ of
$H_{Tot}$, if $\varepsilon_{gs}\neq0,$ or in correspondence of the
ground state $\ket{\varepsilon_{gs,n}^{\alpha}}$ of some modified
version $H_{Tot,n}^{\alpha}$ for some fixed $\alpha$. In the following
discussion we drop for simplicity all indexes $\alpha,n$ and we refer
to a generic operator $H$ and relative ground state $\ket{\varepsilon}$
corresponding to $\varepsilon\neq0$. In general $\ket{\varepsilon}\neq\ket{\psi}\ket{\psi}$
i.e., the ground state is not in a product form and thus the bound
is not saturable. The strategy to find state $\ket{\psi_{sat}}\in\mathcal{H}_{M}$
is based on the Schmidt decomposition $\ket{\varepsilon}=\sum_{n}\lambda_{n}\ket{\lambda_{n}}\ket{\lambda_{n}^{'}}$,
where $\lambda_{n}\ge0$ are the Schmidt coefficients. If the ground
state is unique and the Schmidt coefficients are not degenerate, since
all of the above defined Hamiltonians are symmetric with respect to
a swap of the two identical Hilbert spaces onto which they are defined,
then $\ket{\lambda_{n}}=\ket{\lambda_{n}^{'}},\ \forall n$ i.e.,
the Schmidt decomposition is given in terms of product of \textit{identical}
states $\ket{\lambda_{n}}\ket{\lambda_{n}}$. The decomposition can
thus be used to find the desired $\ket{\psi_{sat}}$. Indeed if $\lambda_{Max}=\max_{n}\lambda_{n}$
a possible natural candidate for $\ket{\psi_{sat}}$ is the state
$\ket{\lambda_{Max}}$. For such state one has
\begin{eqnarray}
\bra{\lambda_{Max}}\bra{\lambda_{Max}}H\ket{\lambda_{Max}}\ket{\lambda_{Max}} & =\nonumber \\
\varepsilon\lambda_{Max}^{2}+\sum_{n=1}^{K}\varepsilon_{n}\left|\bra{\lambda_{Max}}\bra{\lambda_{Max}}\left.\varepsilon_{n}\right\rangle \right|^{2}\label{eq: Average H_Tot Shmidt}
\end{eqnarray}
where $\left\{ \varepsilon_{n},\ket{\varepsilon_{n}}\right\} _{n\ge1}$
are the eigenvalues and eigenstates of $H$ above the ground state,
and $K=M^{2}-1$. Unless $\ket{\varepsilon}=\ket{\lambda_{Max}}\ket{\lambda_{Max}}$
the sum for $n\ge1$ in (\ref{eq: Average H_Tot Shmidt}) is not negligible
such that the average $\bra{\lambda_{Max}}\bra{\lambda_{Max}}H\ket{\lambda_{Max}}\ket{\lambda_{Max}}>\varepsilon$
and it can in general be larger than $\varepsilon$. However, we can
upper bound the sum and to find some conditions on $\lambda_{Max}$
that guarantee that the average is sufficiently close to $\varepsilon$.
Given $\lambda_{Max}$, since $\varepsilon_{n}>0,\ \forall n\ $ then
the sum in (\ref{eq: Average H_Tot Shmidt})
\begin{eqnarray*}
\sum_{n=1}^{K}\varepsilon_{n}\left|\bra{\lambda_{Max}}\bra{\lambda_{Max}}\left.\varepsilon_{n}\right\rangle \right|^{2} & \le & \varepsilon_{K}\left(1-\lambda_{Max}^{2}\right)
\end{eqnarray*}
is upper bounded by the maximal eigenvalue $\varepsilon_{K}$. Therefore
the worst case scenario is given by 
\begin{eqnarray*}
\bra{\lambda_{Max}}\bra{\lambda_{Max}}H\ket{\lambda_{Max}}\ket{\lambda_{Max}} & = & \varepsilon\lambda_{Max}^{2}+\\
 & + & \varepsilon_{K}\left(1-\lambda_{Max}^{2}\right)
\end{eqnarray*}
Now in order for the state $\ket{\lambda_{Max}}\ket{\lambda_{Max}}$
to give a good approximation of $\varepsilon$ one has to impose that
$\varepsilon\lambda_{Max}^{2}\gg\varepsilon_{K}\left(1-\lambda_{Max}^{2}\right)$
or 
\begin{eqnarray}
\frac{\lambda_{Max}^{2}}{\left(1-\lambda_{Max}^{2}\right)} & \gg & \frac{\varepsilon_{K}}{\varepsilon}\label{eq: condition Shmidt Decomposition}
\end{eqnarray}
If one is able to determine $\lambda_{Max}^{2}$ and if the previous
condition is satisfied then 
\begin{eqnarray*}
\bra{\lambda_{Max}}\bra{\lambda_{Max}}H\ket{\lambda_{Max}}\ket{\lambda_{Max}} & \gtrapprox & \varepsilon\lambda_{Max}^{2}
\end{eqnarray*}
In the most favorable case $\lambda_{Max}\left(M\right)=O(1)$ and
$\lambda_{Max}\gg\lambda_{n},\ \forall\lambda_{n}\neq\lambda_{Max}$
i.e., $\lambda_{Max}$ is sufficiently larger than the other Schmidt
coefficients, such that one can identify $\ket{\psi_{sat}}=\ket{\lambda_{Max}}$.

The existence of $\ket{\psi_{sat}}$ allows for the desired assessment
of the goodness of the approximation provided by $\varepsilon$. Since
$V_{Tot}\left(\ket{\psi_{sat}}\right)=\bra{\lambda_{Max}}\bra{\lambda_{Max}}H\ket{\lambda_{Max}}\ket{\lambda_{Max}}\ge\varepsilon$
the actual unknown lower bound $l_{B}$ must lie in the interval $\left[\varepsilon,V_{Tot}\left(\ket{\psi_{sat}}\right)\right]$;
the smaller this interval the better the approximation. In the examples
described below we provide evidences that the above method can indeed
be successfully applied.

\section{Examples\label{sec: Examples}}

The examples that we present are different in many aspects, and we
use each of them to highlight different features of the scheme proposed
and how the latter can in principle be further modified. The first
two involve generators of the $su(2)$ algebra, and their relative
bounds have already been obtained in the literature. The other ones
are new. The third example involves $su(3)$ operators; this will
also allow us to compare the results obtainable with our approach
with those obtained with other methods \cite{ZycowskyNumricalRange}.
We finally use the fourth example to show how the mappings proposed
may be used even in the case unbounded operators. 

\subsection{Generators of $su(2)$\label{subsec: Generators-of-su(2)}}

In this first example we show a case in which the initial mapping
provided by $H_{Tot}$ is sufficient to obtain the desired lower bound;
and we also show how $H_{Tot}$ and $H_{Tot,n}^{\alpha}$ are just
starting points and different mappings are possible depending on the
specific problem at hand. We recover the bound for the sum of the
variances of the three generators $J_{X},J_{Y},J_{Z}$ of the $2j+1-$dimensional
irreducible representation of $su\left(2\right)$:
\begin{eqnarray}
V_{XYZ} & = & \Delta^{2}J_{X}+\Delta^{2}J_{Y}+\Delta^{2}J_{Z}\label{eq: V_XYZ}
\end{eqnarray}
The attainable lower bound of $l_{B}=j$ has already be found with
different methods \cite{URHoffmanEntanglementDetection,SIndepURMaccone}.
Here in principle the operator $H_{Tot}$ one needs to diagonalize
is 
\begin{eqnarray}
H_{Tot}= & \underset{{\scriptstyle \alpha=X,Y,Z}}{\sum} & \left(\frac{J_{\alpha}^{2}\otimes\mathbb{I}_{2j+1}+\mathbb{I}_{2j+1}\otimes J_{\alpha}^{2}}{2}-J_{\alpha}\otimes J_{\alpha}\right)\nonumber \\
\label{eq: HTot JX JZ}
\end{eqnarray}
It turns out that its ground state energy $\varepsilon_{gs}=j$ coincides
with $l_{B}$ and it is attained by the product ground states $\ket{j,j}_{z}\otimes\ket{j,j}_{z}$
and $\ket{j,-j}_{z}\otimes\ket{j,-j}_{z}$, such that the bound for
the variance is indeed attainable. In order to show how the method
we propose can be flexibly adapted to specific situations we obtain
the same result by means of a different mapping that makes use of
the following property of the $su(2)$ algebra. The Casimir operator
of the $su(2)$ algebra can be expressed as 
\begin{eqnarray*}
C & = & J_{X}^{2}+J_{Z}^{2}+J_{Z}^{2}=\\
 & = & j(j+1)\mathbb{I}_{2j+1}
\end{eqnarray*}
therefore, by using the previous relation, one can map the minimization
of the sum of variances
\begin{eqnarray*}
V_{XYZ} & = & j(j+1)-\left\langle J_{X}\right\rangle ^{2}-\left\langle J_{Y}\right\rangle ^{2}-\left\langle J_{Z}\right\rangle ^{2}
\end{eqnarray*}
into a new eigenvalue problem based on the operator 
\begin{eqnarray*}
H_{Tot}^{'} & = & j(j+1)\mathbb{I}_{2j+1}\otimes\mathbb{I}_{2j+1}-\sum_{\alpha=X,Y,Z}J_{\alpha}\otimes J_{\alpha}
\end{eqnarray*}
where again, for every state $\ket{\psi}\in\mathcal{H}_{2j+1}$ one
has $V_{XYZ}\left(\ket{\psi}\right)=\bra{\psi}\bra{\psi}H_{Tot}^{'}\ket{\psi}\ket{\psi}$.
Now the operator $H_{Heis}=-\sum_{\alpha=X,Y,Z}J_{\alpha}\otimes J_{\alpha}$
is well known since it represents a Heisenberg isotropic Hamiltonian
whose ferromagnetic ground states are for example $\ket{j,j}_{z}\otimes\ket{j,j}_{z}$
( $\ket{j,-j}_{z}\otimes\ket{j,-j}_{z}$) and they correspond to a
ground state energy $\varepsilon_{gs}^{Heis}=-j^{2}$ such that 
\begin{eqnarray}
\min V_{XYZ} & = & \bra{j,j}\bra{j,j}H_{Tot}^{'}\ket{j,j}\ket{j,j}=\nonumber \\
 & = & j\label{eq: V_XYZ_min_j/2}
\end{eqnarray}
The lower bound found is thus non-trivial and, since in this case
the ground states are product states, it is saturated by $\ket{\psi_{sat}}=\ket{j,j},\ket{-j,-j}$.
It is then easy to check that the states $\ket{j,j}_{z}\otimes\ket{j,j}_{z}$
and $\ket{j,-j}_{z}\otimes\ket{j,-j}_{z}$ are also ground states
of $H_{Tot}$ and that they correspond to the ground state energy
$\varepsilon_{gs}=j$. \\This first result shows on one hand that
the mapping (\ref{eq: Mapping VTot HTot}) introduced in the previous
Section can directly provide the desired lower bound in terms of $\varepsilon_{gs}$.
On the other hand, it shows that by using the information about the
relations between the operators involved in $V_{XYZ}$, in this case
the algebraic relation provided by the Casimir, one can find another
mapping that allows to derive the desired lower bound as the solution
of a known eigenvalue problem.

\subsection{Spin operators and planar squeezing\label{subsec:Spin-operators-and}}

We now focus on an example that allows us to illustrate many of the
results derived in the previous section. We first derive the lower
bound by selecting the relevant Hamiltonian on the basis of symmetry
arguments. We then discuss how one can find the state $\ket{\psi_{sat}}$
able to fairly well approximate the bound and we show that the $\ket{\psi_{sat}}$
we identify is in principle obtainable in the laboratory via two-axis
spin squeezing \cite{KitagawaSpinSqueezing,NoriSpinSqueezReview}.
\\We focus on a pair of generators of $su(2)$. In order to fix the
ideas and without loss of generality we choose to work with 
\begin{eqnarray}
V_{XZ} & = & \Delta^{2}J_{X}+\Delta^{2}J_{Z}\label{eq: VXZ}
\end{eqnarray}
The minimization of $V_{XZ}$ has been introduced in \cite{HePQS},
where it was shown that the simultaneous reduction of the noise $V_{XZ}$
of two orthogonal spin projections in the plane $XZ$ (e.g. $J_{X},J_{Z}$)
can be relevant for the optimization one-shot phase measurements,
since it allows for phase uncertainties $\Delta\phi\sim j^{-2/3}$
i.e., a precision beyond the standard quantum limit, that importantly
do not depend on the actual value of the phase $\phi$ \cite{HePQSEntangInterfero,PQSExpermMitchell,PQSNDMeasMitchell}.
In \cite{HePQS} the behaviour of $V_{XZ}$ in the asymptotic limit
$j\rightarrow\infty$ was obtained by means of analytical arguments
and the overall behaviour of $V_{XZ}^{min}\left(j\right)$ via numerical
fitting such that 
\begin{eqnarray}
V_{XZ}^{min_{1}}\left(j\right) & \simeq & 0.595275\ j^{2/3}-0.1663\ j^{1/3}+0.0267\nonumber \\
\label{eq: VXZ He scaling}
\end{eqnarray}
On the other hand, in \cite{WerAngularMomentum} the asymptotic behaviour
was obtained numerically by means of a \textit{seesaw} algorithm as
\begin{eqnarray}
V_{XZ}^{min_{2}}\left(j\right) & \approx & 0.569524\ j^{2/3}\label{eq: VXZ Werner scaling}
\end{eqnarray}
We start our analysis by showing that the Hamiltonian 
\begin{eqnarray*}
H_{Tot}= & \sum_{\alpha=X,Z} & \left(\frac{J_{\alpha}^{2}\otimes\mathbb{I}_{2j+1}+\mathbb{I}_{2j+1}\otimes J_{\alpha}^{2}}{2}-J_{\alpha}\otimes J_{\alpha}\right)
\end{eqnarray*}
has ground state energy is zero. Indeed, $\forall j$ one can write
\begin{eqnarray*}
\ket{\varepsilon_{gs}} & = & \frac{1}{\sqrt{2j+1}}\sum_{m_{z}=-j}^{j}\ket{j,m_{z}}\ket{j,m_{z}}=\\
 & = & \frac{1}{\sqrt{2j+1}}\sum_{m_{x}=-j}^{j}\ket{j,m_{x}}\ket{j,m_{x}}
\end{eqnarray*}
and check that $\varepsilon_{gs}=0$. One can subsequently use result
$ii)$ in Proposition \ref{Prop 1} and evaluate $\varepsilon_{1}\left(1-\frac{1}{2j+1}\right)$.
However in this case one can easily check that $\varepsilon_{1}=0.5$
for all $j$ and therefore $H_{Tot}$ provides a non-zero lower bound
which scales poorly with $j$. We are thus led to use the strategy
based on the Hamiltonians $H_{Tot,n}^{\alpha}$ described in Proposition
\ref{Prop-3}. This is however a case in which we can apply Proposition
\ref{Prop 4}. Indeed, one has that $U=\exp\left(-i\pi J_{Z}\right)$
is such that $UJ_{X}U^{\dagger}=-J_{X}$ and the adjoint action of
$U\otimes U$ obviously leaves the whole Hamiltonian $H_{Tot}$ invariant.
Therefore one can start by searching for the lower bound among the
states belonging to the set $S_{X}^{0}=\left\{ \ket{\psi}\in\mathcal{H}_{2j+1}|\bra{\psi}J_{X}\ket{\psi}=0\right\} $
and use the Hamiltonian
\begin{eqnarray}
H_{Tot,X} & = & \sum_{\alpha=X,Z}\left(\frac{J_{\alpha}^{2}\otimes\mathbb{I}_{2j+1}+\mathbb{I}_{2j+1}\otimes J_{\alpha}^{2}}{2}\right)-J_{Z}\otimes J_{Z}\nonumber \\
\label{eq: HTotX}
\end{eqnarray}
The relative lower bound $\varepsilon_{gs,X}^{0}$ provides a local
minimum. Then one should extend the search by using the Hamiltonian
$H_{Tot,X}^{\alpha}$ with $\alpha\in\left[0,j\right]$. Of course
this strategy is of use when $j$ is sufficiently small, whereas $j$
becomes large the task would be quite demanding. However, in this
case the search in $S_{X}^{0}$ is sufficient to obtain the overall
lower bound since the Hamiltonian $H_{Tot}$ enjoys the same type
of continuous symmetry of $V_{Tot}$. Indeed $V_{Tot}\left[\ket{\psi}\right]=V_{Tot}\left[\exp\left(i\theta J_{Y}\right)\ket{\psi}\right]$
for all $\ket{\psi}$ and $\theta\in\mathbb{R}$ and in the same way
given $U_{YY}=\exp\left(-i\theta J_{Y}\right)\otimes\exp\left(-i\theta J_{Y}\right)$
\begin{eqnarray*}
U_{YY}H_{Tot}U_{YY}^{\dagger} & = & H_{Tot}
\end{eqnarray*}
and this allows to limit the minimization over $S_{X}^{0}$ \cite{HePQS,WerAngularMomentum}
(see also Appendix \ref{sec: Appendix: symmetry for spin Ham}). Furthermore
since the role of $Z$ and $X$ can be exchanged we can focus on $H_{Tot,X}$
only. We notice that, when expressed in the $J_{Z}$ eigenbasis, $H_{Tot,X}$
is banded and sparse and thus efficient algorithms can be used for
its diagonalization. The ground state energy $\varepsilon_{gs,X}(j)$
can then be numerically evaluated for different values of $j$, it
is always non-zero and the results are plotted in Fig. (\ref{fig: VarXZ GROUND STATE SCALING}
- left panel) and compared with the two bounds (\ref{eq: VXZ He scaling})
and (\ref{eq: VXZ Werner scaling}). The result show that $\forall j$
$\varepsilon_{gs,X}(j)\le V_{XZ}^{min_{1}}\left(j\right)\le V_{XZ}^{min_{2}}\left(j\right)$
and the ground state energy of $H_{Tot,X}$ provide a fairly good
and meaningful lower bound. 

\begin{figure}[h]
\subfigure{\includegraphics[scale=0.8]{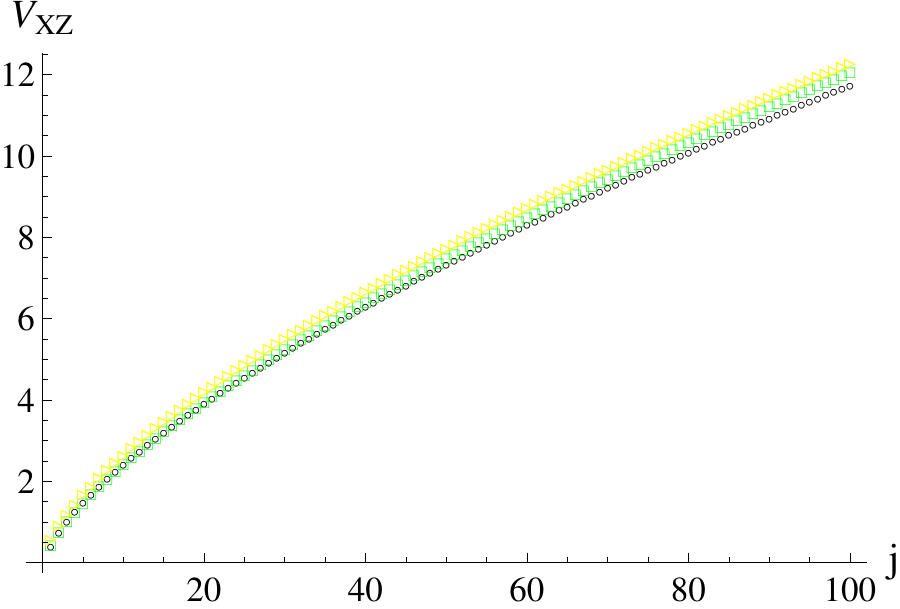} }
\subfigure{ \includegraphics[scale=0.8]{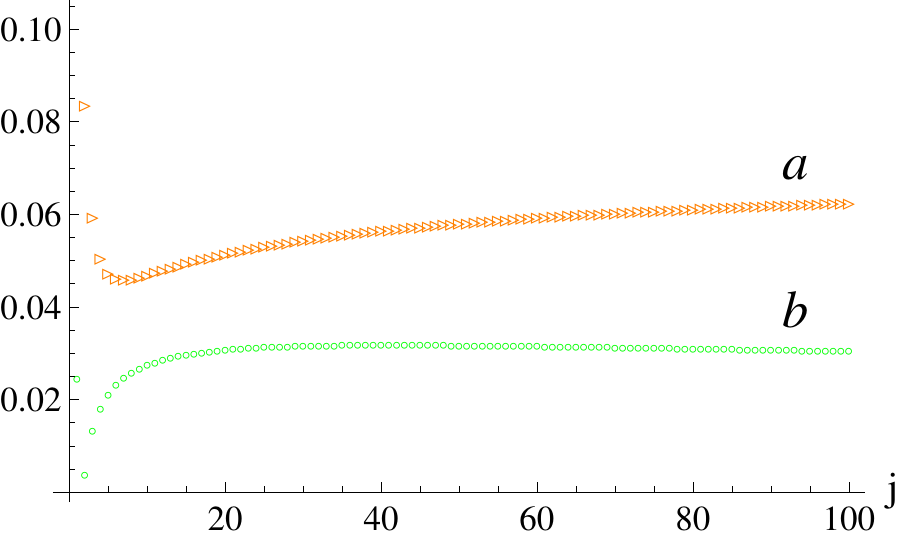}}\caption{\textbf{Left: }Scaling of the sum of variances $V_{XZ}$ with $j=(1,100)$:
(black circles) lower bound of $V_{XZ}$ provided by the ground state
energy $\varepsilon_{gs,X}\left(j\right)$ of the Hamiltonian (\ref{eq: HTotX});
(green squares ) $V_{XZ}^{min_{1}}\left(j\right)$ as in (\ref{eq: VXZ He scaling});
( yellow triangles) $V_{XZ}^{min_{2}}\left(j\right)$ as in (\ref{eq: VXZ Werner scaling}).
\textbf{Right: }Relative errors obtained with the use of $\ket{\theta_{m}}=\exp\left(-i\theta_{m}H_{TAS}\right)\ket{j,j}$
(see text) as a function of $j=1,..,100$. Curve \textbf{a} (triangles)
$|V_{Tot}(\ket{\theta_{m}})-\varepsilon_{gs,X}|/\varepsilon_{gs,X};$
curve \textbf{b} (circles) $|V_{Tot}(\ket{\theta_{m}})-V_{XZ}^{min_{1}}\left(j\right)|/V_{XZ}^{min_{1}}\left(j\right)$
\label{fig: VarXZ GROUND STATE SCALING}}
\end{figure}

The algorithm implemented requires the diagonalization process that
eventually determines the value of the bound. However, the structure
of the state $\ket{\psi_{sat}}$ able to approximately saturate the
bound is not directly apparent from the algorithm unless the ground
state is a product state $\ket{\varepsilon_{gs,X}}=\ket{\psi}\ket{\psi}$.
In this case, the numerical computations suggest that the ground state
is not a in a product form although it provides values which are pretty
close to those evaluated in (\ref{eq: VXZ He scaling}). The results
obtained can be refined in the following way. For generic $j$ one
has that the numerical found ground state energy is doubly degenerate.
By fixing $j$ one can explore the ground state manifold in search
for a ground state whose Schmidt decomposition can be written as $\ket{\varepsilon_{gs,X}}=\sum_{n}\lambda_{n}\ket{\lambda_{n}}\ket{\lambda_{n}}$
and such that the maximum Schmidt coefficient is sufficiently large.
For fixed $j$ we can identify two states $\ket{\lambda_{Max}^{+}},\ket{\lambda_{Max}^{-}}$
corresponding to two different states $\ket{\varepsilon_{gs,X}^{+}},\ket{\varepsilon_{gs,X}^{-}}$
both belonging to the ground state manifold and for which the largest
Schmidt coefficients coincide. For example with $j=9/2$ one finds
sufficiently large values $\lambda_{Max}^{+}=\lambda_{Max}^{-}=0.99619$
. The overlap of the product states with the respective ground states
is equal and large i.e., $\left\langle \varepsilon_{gs,+}^{XZ}\right.\ket{\lambda_{Max}^{+}}\ket{\lambda_{Max}^{+}}=\left\langle \varepsilon_{gs,-}^{XZ}\right.\ket{\lambda_{Max}^{-}}\ket{\lambda_{Max}^{-}}=0.996191$.
Similar results have be obtained for generic values of $j\le100$,
thus one one hand both states $\ket{\lambda_{Max}^{+}},\ket{\lambda_{Max}^{-}}$
constitute good candidates for $\ket{\psi_{sat}}$ and for the (approximate)
saturation of the found lower bound, and on the other hand the result
is an indirect confirmation that the lower bound provided by $\varepsilon_{gs,X}$
is close to the actual one $l_{B}$. 

In order to estimate the error in determining the lower bound via
$\varepsilon_{gs,X}$ i.e., $V_{Tot}\left(\ket{\psi_{sat}}\right)-\varepsilon_{gs,X}$,
we now proceed with a further refined approach to determine $\ket{\psi_{sat}}$.
Indeed, while the states $\ket{\lambda_{Max}^{\pm}}$, which are good
candidates for $\ket{\psi_{sat}}$, are obtained numerically it would
be desirable to find analogous states that at least in principle can
be produced in the laboratory, and that have the same property of
$\ket{\lambda_{Max}^{\pm}}$ i.e..to approximately saturate the lower
bound. In Appendix \ref{Appendix: Planar Spin Squeezing} we show
how starting from the knowledge of the shape of $\ket{\lambda_{Max}^{\pm}}$
and by means of further physical insights one can indeed identify
the following candidate
\begin{eqnarray*}
\ket{\theta} & = & \exp\left(-i\theta H_{TAS}\right)\ket{j,j}
\end{eqnarray*}
where: $\ket{j,j}$ is the eigenstate of $J_{Z}$ corresponding to
the eigenvalue $j$; and 
\begin{eqnarray*}
H_{TAS} & = & -i\left(J_{+}^{2}-J_{-}^{2}\right)
\end{eqnarray*}
is the two-axis squeezing operator \cite{KitagawaSpinSqueezing,NoriSpinSqueezReview};
the latter having the property of squeezing the state along the $X$
axis and simultaneously anti-squeeze it along the $Y$ axis. As shown
in Appendix \ref{Appendix: Planar Spin Squeezing}, by means of the
mapping provided by the Holstein Primakoff approximation, it is possible
to infer the optimal value of the squeezing parameter $\theta_{m}=-\frac{\log2+\log j}{24\ j}$
such that $\ket{\psi_{sat}}=\ket{\theta_{m}}$ provides a good approximation
of the lower bound for each $j$. In Figure \ref{fig: VarXZ GROUND STATE SCALING}
(right panel, curve b) we plot $|V_{Tot}(\ket{\theta_{m}})-V_{XZ}^{min_{1}}\left(j\right)|/V_{XZ}^{min_{1}}\left(j\right)$
i.e., the relative error in the evaluation of $V_{Tot}$ with respect
to the best bound given by $V_{XZ}^{min_{1}}\left(j\right)$. For
$j\le100$ the error is firmly below $3\%$, thus showing that the
approximation provided by $\ket{\theta_{m}}$ is indeed quite good.
\\With the aid of $\ket{\theta_{m}}$ we can then provide an estimate
of the errors in the determination of the lower bound by means of
$\varepsilon_{gs,X}$. In Figure \ref{fig: VarXZ GROUND STATE SCALING}
(right panel, curve a) we plot $|V_{Tot}(\ket{\theta_{m}})-\varepsilon_{gs,X}|/\varepsilon_{gs,X}$;
the latter shows that the relative error is for $j\le100$ of the
order of $6\%$, a result that confirms the goodness of the approximation
provided by $\varepsilon_{gs,X}$. Similar results can be obtained
directly using $\ket{\lambda_{Max}^{+}},\ket{\lambda_{Max}^{-}}$
instead of $\ket{\theta_{m}}$. \\We finally notice that the state
$\ket{\theta_{m}}$ is in principle obtainable in the laboratory via
two-axis squeezing and thus is a good candidate for the estimation
procedure based on Planar Squeezed states. While the realization of
the latter has been proposed in \cite{HePQS} as the ground state
of a two-mode Bose-Einstein condensate and in \cite{PQSNDMeasMitchell}
as the result of a non-demolition quantum measurement protocol, here
\textit{we provide evidence that the same result can be obtained via
two-axis spin-squeezing}.

\subsection{su(3) operators\label{subsec:su(3)-operators}}

We now derive the lower bound for the sum of the variances of $4$
operators belonging to the $su(3)$ algebra. This will allow us to
show the results of Proposition \ref{Prop-3} in action. Consider
the following operators 
\begin{eqnarray*}
A_{1}=\left(\begin{array}{ccc}
0 & 1 & 0\\
1 & 0 & i\\
0 & -i & 0
\end{array}\right), &  & A_{2}=\left(\begin{array}{ccc}
1 & 0 & 0\\
0 & 0 & 0\\
0 & 0 & -1
\end{array}\right)\\
A_{3}=\left(\begin{array}{ccc}
1 & 1 & 0\\
1 & 0 & -1\\
0 & -1 & -1
\end{array}\right), &  & A_{4}=\left(\begin{array}{ccc}
1 & 0 & i\\
0 & 0 & 0\\
-i & 0 & -1
\end{array}\right)
\end{eqnarray*}
The bounds for the sum of pair of variances $V_{12}=\Delta^{2}A_{1}+\Delta^{2}A_{2}\ge15/32$
and $V_{34}=\Delta^{2}A_{3}+\Delta^{2}A_{4}\ge0.765727$ were found
in \cite{ZycowskyNumricalRange} on the basis of the \textit{(uncertainty)
numerical range} approach. If we compare these results with the approximations
$\tilde{l}_{B}^{-}$ obtained within our framework we find that: for
$V_{12}$, $\tilde{l}_{B}^{-}=0.4384$ which is approximately $6.5\%$
lower that the value found in \cite{ZycowskyNumricalRange}; while
for $V_{34}$, $\tilde{l}_{B}^{-}=0.7281$ which is approximately
$5\%$ lower that the value found in \cite{ZycowskyNumricalRange}.
As for the lower bound of the sum of the four variances $V_{Tot}=\Delta^{2}A_{1}+\Delta^{2}A_{2}+\Delta^{2}A_{3}+\Delta^{2}A_{4}$
the ground state energy of the corresponding $H_{Tot}$ is different
from zero and it provides a first approximation of the searched lower
bound i.e., $\varepsilon_{gs}=0.804103$. The problem does not appear
to have evident symmetries and in order to check the consistency of
$\varepsilon_{gs}$ and to refine the approximation we then use the
method outlined in Proposition \ref{Prop-3}. In Figure (\ref{fig: Var1234 su(3)})
we plot the values of the ground states $\varepsilon_{gs,n}^{\alpha}$
of the Hamiltonians $H_{Tot,n}^{\alpha},\ n=1,2,3,4$ as a function
of $\alpha\in\left[a_{n1,}a_{n3,}\right]$ i.e., $\alpha$ varying
in the interval defined by the lowest/highest eigenvalue of each $A_{n}$.
The best lower bound $\tilde{l}_{B}^{-}=\max_{n}\min_{\alpha}\varepsilon_{gs,n}^{\alpha}$is
obtained with the Hamiltonian $H_{Tot,1}^{\alpha}$ in correspondence
of the value $\alpha=0.963$. The corresponding lower bound $\tilde{l}_{B}^{-}=\varepsilon_{gs,1}^{\alpha=0.963}=1.39932$
is higher than $\varepsilon_{gs}=0.804103$, therefore showing that
the method outlined in Proposition \ref{Prop-3} allows for a significative
refinement of the result. If we now find the Schmidt decomposition
of $\ket{\varepsilon_{gs,1}^{\alpha=0.963}}$, we have that the largest
Schmidt coefficient is $\lambda_{Max}=0.941487$ and for the corresponding
$\ket{\lambda_{Max}}$ the value of $V_{Tot}\left(\ket{\lambda_{Max}}\right)=1.5901$.
Therefore the actual bound $l_{B}$ will lie in the interval $\left(\varepsilon_{gs,1}^{\alpha=0.963},V_{Tot}\left(\ket{\lambda_{Max}}\right)\right]=\left(1.39932,1.5901\right]$.
Since the Hilbert space has dimension $3$ we have performed a standard
minimization procedure directly on $V_{Tot}$ and we have obtained
$l_{B}\approx1.56274$ such that: $\varepsilon_{gs}$ is about half
the value $l_{B}$; $\varepsilon_{gs,1}^{\alpha=0.963}$ results to
be smaller for about $10\%$; while $V_{Tot}\left(\ket{\lambda_{Max}}\right)$
is just $1.6\%$ higher.

\begin{figure}
\subfigure{\includegraphics[scale=0.48]{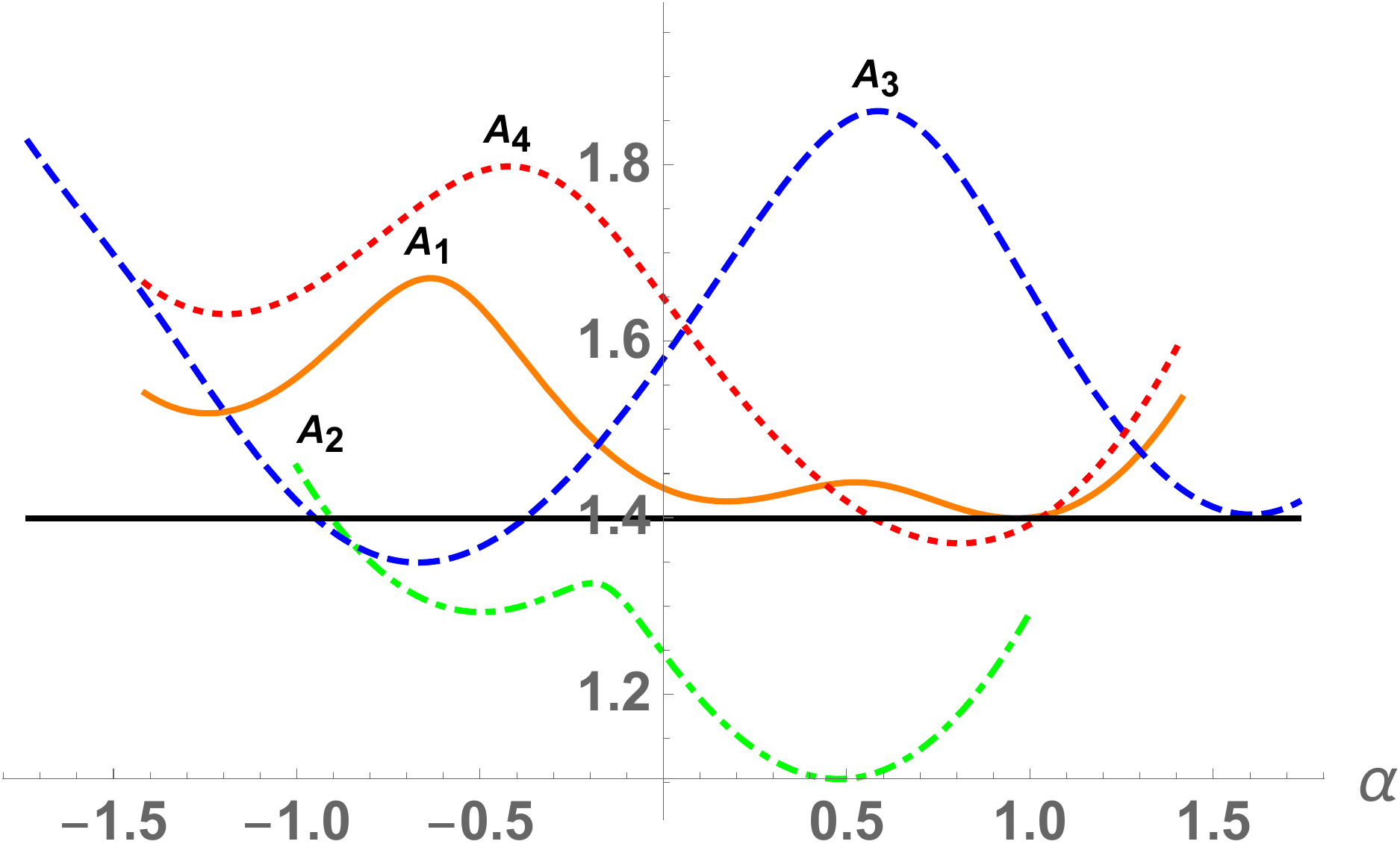}}\caption{Plot of $\varepsilon_{gs,n}^{\alpha}$ as a function of $\alpha\in\left[a_{n1,}a_{n4,}\right]$
for the operators $A_{1}$ (orange continuous), $A_{2}$ (dash dotted)
, $A_{3}$ (dashed) , $A_{4}$ (dotted) . The best lower bound $\varepsilon_{gs,1}^{\alpha=0.963}$
is attained for $H_{Tot,1}^{\alpha=0.963}$ (black continuous horizontal)
\label{fig: Var1234 su(3)}}
\end{figure}

\subsection{Harmonic oscillator operators $\hat{n},\hat{x}$\label{subsec:Harmonic-oscillator-operators}}

While the definition of $H=\sum_{n}H_{n}$ was given for bounded operators,
one can use the same definition for unbounded one and use the same
mapping (\ref{eq: Mapping VTot HTot}), which of course remains valid,
for finding the relative lower bounds. In the following we show how
the procedure and the results of Section \ref{sec: General-Results}
can be applied by focusing a specific example. We consider the operators
$\hat{n}$ (number operator) and $\hat{x}$ (position operator) for
a single bosonic mode and we seek for the lower bound of 
\begin{eqnarray}
V_{xn} & = & \Delta^{2}\hat{n}+\Delta^{2}\hat{x}\label{eq: Var_xn}
\end{eqnarray}
The latter is very much analogous to the bosonic counterpart of $V_{XZ}$
with $j=1$, see equation (\ref{eq: VXZBoson}) in Appendix \ref{Appendix: Planar Spin Squeezing}.
The analogy with the spin case is strengthened by the three variances
sum 
\begin{eqnarray*}
V_{xpn} & = & \Delta^{2}\hat{n}+\Delta^{2}\hat{x}+\Delta^{2}\hat{p}\ge1
\end{eqnarray*}
whose lower bound is again attained by the analog of $\ket{j,j}$
i.e., the vacuum $\ket{0}$ for which $V_{xpn}=1$ and $V_{xn}=1/2$.
If one is to reduce $V_{xn}$ one needs to simultaneously reduce $\Delta^{2}\hat{x}<1/2$
and therefore enhance $\Delta^{2}\hat{p}>1/2$. \\The starting Hamiltonian
here is 
\begin{eqnarray*}
H_{Tot} & = & \frac{1}{2}\left(\hat{n}^{2}\otimes\mathbb{I}+\mathbb{I}\otimes\hat{n}^{2}\right)-\hat{n}\otimes\hat{n}+\\
 & + & \frac{1}{2}\left(\hat{x}^{2}\otimes\mathbb{I}+\mathbb{I}\otimes\hat{x}^{2}\right)-\hat{x}\otimes\hat{x}
\end{eqnarray*}
and its approximate ground state energy can be found by expressing
$\hat{x}=\left(a+a^{\dagger}\right)/\sqrt{2}$ and by truncating the
single mode Fock space i.e., by expressing $H_{Tot}$ in the subspace
$\mathcal{H}_{n_{Max}}\otimes\mathcal{H}_{n_{Max}}$ with $\mathcal{H}_{n_{Max}}=span\left\{ \ket{0},\ket{1},..,\ket{n_{Max}}\right\} $
where $\ket{n}$ is an $n$ bosons state. By letting the maximum number
of bosons $n_{Max}$ grow we numerically check that $\varepsilon_{gs}\rightarrow0$,
therefore $H_{Tot}$ itself does not provide a meaningful lower bound.
However here we can again resort to the result of Proposition \ref{Prop 4}
and thus identify the needed modified Hamiltonian. Indeed, the relevant
unitary operator here is $U_{\theta}=\exp\left(-i\theta\hat{n}\right)$;
one has that $U_{\pi}\hat{x}U_{\pi}^{\dagger}=-\hat{x}$, and the
adjoint action of $U_{\pi}\otimes U_{\pi}$ leaves the Hamiltonian
$H_{Tot}$ invariant. Therefore, in search for the lower bound we
can start restricting ourselves to the states belonging to $S_{\hat{x}}^{0}=\left\{ \ket{\psi}\in\mathcal{H}_{bos}|\left\langle \hat{x}\right\rangle =0\right\} $
and consider the Hamiltonian 
\begin{eqnarray*}
H_{Tot,\hat{x}} & = & \frac{1}{2}\left(\hat{n}^{2}\otimes\mathbb{I}+\mathbb{I}\otimes\hat{n}^{2}\right)-\hat{n}\otimes\hat{n}+\\
 & + & \frac{1}{2}\left(\hat{x}^{2}\otimes\mathbb{I}+\mathbb{I}\otimes\hat{x}^{2}\right)
\end{eqnarray*}
and its ground state energy $\varepsilon_{gs,\hat{x}}^{0}$ which
is a local minimum. For sufficiently high values of $n_{Max}$ one
has that $\varepsilon_{gs,\hat{x}}^{0}$ converges to the value $\varepsilon_{gs,\hat{x}}^{0}\approx0.412721<1/2$.
The ground state in this case $\ket{\varepsilon_{gs,\hat{x}}^{0}}\neq\ket{\psi}\ket{\psi}$
is not in a product form, however we can again use the argument outlined
in Section \ref{sec: General-Results} and find the Schmidt decomposition
$\ket{\varepsilon_{gs,\hat{x}}^{0}}=\sum_{n}\lambda_{n}\ket{\lambda_{n}}\ket{\lambda_{n}}$.
For $n_{Max}=30$ we have that the maximum Schmidt coefficient $\lambda_{Max}\approx0.99931$
such that one is led to consider the corresponding state $\ket{\lambda_{Max}}\ket{\lambda_{Max}}$
as a fairly good approximation of the ground state. Indeed $\left|\left\langle \varepsilon_{gs,\hat{x}}^{0}\right.\ket{\lambda_{Max}}\ket{\lambda_{Max}}\right|\approx0.99931$
and therefore $\ket{\psi_{sat}}=\ket{\lambda_{Max}}$ in this case
is a good candidate for the minimization of (\ref{eq: Var_xn}). This
is confirmed by the value $V_{xn}\left(\ket{\lambda_{Max}}\right)\approx0.415139$
such that the relative error of the approximation $\left|V_{xn}\left(\ket{\lambda_{Max}}\right)-\varepsilon_{gs,\hat{x}}^{0}\right|/\varepsilon_{gs,\hat{x}}^{0}\approx0.5\%$
is excellent. While the previous results have been obtained numerically,
the following arguments allow one to identify a state realizable in
the laboratory that closely approximate $\ket{\lambda_{Max}}$. Just
as in the spin case the profile of $\ket{\lambda_{Max}}=\sum_{n=0}^{n_{Max}}\eta_{n}\ket{n}$
is such that only the states with even number of bosons are populated,
the distribution of probability is peaked for $n=0$ and it rapidly
decreases with $n$. As in the $J_{X},J_{Z}$ case this again hints
to the preferred tentative choice of the single mode squeezed state
\begin{eqnarray*}
\ket{\xi} & = & \frac{1}{\sqrt{\cosh\left|\xi\right|}}\sum_{n=0}^{\infty}\left(-\tanh\left|\xi\right|\right)^{n}\frac{\sqrt{\left(2n\right)!}}{2^{n}n!}\ket{2n}
\end{eqnarray*}
 as candidate for the minimization of $V_{xn}$. Indeed, in terms
of $\ket{\xi}$ (\ref{eq: Var_xn}) reads 
\begin{eqnarray}
V_{xn} & = & 2\sinh^{2}(\left|\xi\right|)\cosh^{2}(\left|\xi\right|)+\frac{\exp\left(-2\left|\xi\right|\right)}{2},\label{eq: Vxn for single squeezed vacuum}
\end{eqnarray}
its minimum is obtained for $\xi=\xi_{m}=0.1665679$ and it is equal
to $V_{xn}(\ket{\xi_{m}})=0.41591$ which is a fairly good approximation
of $\varepsilon_{gs,\hat{x}}$ and $V_{xn}\left(\ket{\lambda_{Max}}\right)$.
Indeed, if one evaluates the fidelity between $\ket{\xi_{m}}$ and
the numerically obtained $\ket{\lambda_{Max}}$ one has $\left\langle \xi_{m}\left|\lambda_{Max}\right.\right\rangle =0.999927$;
furthermore $\left|\left\langle \varepsilon_{gs,\hat{x}}^{0}\right.\ket{\xi_{m}}\ket{\xi_{m}}\right|=0.999168$
such that $\ket{\xi_{m}}\ket{\xi_{m}}$ also provides a good approximation
of the ground state. 

Now in principle in order to find whether $\varepsilon_{gs,\hat{x}}^{0}$
is a proper and faithful lower bound one should extend the search
to the other sets $S_{\hat{x}}^{\alpha}$, $\alpha\in\left[0,\infty\right]$,
which is of course an impossible task. We thus opt for a different
strategy. In the first place, the result can be further supported
analytically by showing that $\ket{\xi_{m}}$ minimizes $V_{xn}$
over the restricted set of Gaussian states; this is shown in Appendix
\ref{sec: Appendix: bosonic-case xn}. Since the minimum corresponds
to $\ket{\xi_{m}}$ with $\left\langle n\right\rangle $ very small,
we further support our result by using standard numerical minimization
routines and search for the minimum of $V_{xn}$ in a sub space $H_{n_{Max}}=span\left\{ \ket{0},\ket{1},..,\ket{n_{Max}}\right\} $
with $n_{Max}$ sufficiently large; the numerical results rapidly
converge to the lower bound found above. 

We have thus shown how the method proposed can in principle work even
with sums of variances involving unbounded operators. With the analysis
of the Schmidt decomposition of the ground state $\ket{\varepsilon_{gs,\hat{x}}^{0}}$,
and the subsequent reasonings and calculations, we have shown that
is possible to identify a state that approximately saturates the bound
provided by $\varepsilon_{gs,\hat{x}}^{0}$. Therefore even in this
case the latter can be considered a good approximation of the actual
bound $l_{B}$.

\section{Conclusions}

In this work we have addressed the problem of finding the state independent
lower bound $l_{B}$ of the sum of variances $V_{Tot}\left(\ket{\psi}\right)=\sum_{1}^{N}\Delta_{\ket{\psi}}^{2}A_{n}$
for an arbitrary set $\left\{ A_{n}\right\} _{n=1,..,N}$ of Hermitian
operators acting on an Hilbert space $\mathcal{H}_{M}$ with dimension
$M$. The value $l_{B}$ is the highest positive constant such that
$\forall\ket{\psi}\in\mathcal{H_{M}},\ V_{Tot}\left(\ket{\psi}\right)\ge l_{B}$.
In general the problem can be solved by finding a sufficiently good
approximation $\tilde{l}_{B}^{-}\le l_{B}$. To this aim we have introduced
a method based on a mapping of the minimization problem into the task
of finding the ground state energy $\varepsilon_{gs}$ of specific
Hamiltonians acting on an extended space $\mathcal{H}_{M}\otimes\mathcal{H}_{M}$.
In such way we have shown that $\varepsilon_{gs}=\tilde{l}_{B}^{-}$
i.e., $\varepsilon_{gs}$ provides the required approximation.\\
In our work we have first provided the main general results that characterize
the method proposed and then, by means of different examples, we have
described its implementation. While we have shown an instance where
$\varepsilon_{gs}=l_{B}$, in general the ground state $\ket{\varepsilon_{gs}}\in\mathcal{H}_{M}\otimes\mathcal{H}_{M}$
corresponding to $\varepsilon_{gs}$ is not in a product form, such
that the corresponding $\varepsilon_{gs}=\tilde{l}_{B}^{-}<l_{B}$
will only be an approximation of the actual $l_{B}$, and the bound
provided by $\varepsilon_{gs}$ will not be attainable, even though
it will still be a valid state independent lower bound. In such cases
we have also proposed and tested a method to identify, from the knowledge
of the ground state $\ket{\varepsilon_{gs}}\in\mathcal{H}_{M}\otimes\mathcal{H}_{M}$,
a state $\ket{\psi_{sat}}\in\mathcal{H}_{M}$ that allows, at least
approximately, to saturate the bound i.e., $V_{Tot}\left(\ket{\psi_{sat}}\right)\gtrapprox l_{B}$
. This procedure provides an efficient way to assess the quality of
the approximations given by $\varepsilon_{gs}$ and $V_{Tot}\left(\ket{\psi_{sat}}\right)$:
the true lower bound $l_{B}$ must lie in the interval $\left(\varepsilon_{gs},V_{Tot}\left(\ket{\psi_{sat}}\right)\right]$.
The examples developed show that the latter can be very small, such
that even when $\varepsilon_{gs}\neq l_{B}$ the approximations are
quite good. While the main general results have been derived for bounded
(non-degenerate) operators, we have also shown by means of an example,
that the method can be applied to sum of variances involving unbounded
operators.\\ The results presented constitute a first attempt to
lay down a general and reliable framework, alternative to the existing
ones, for deriving meaningful state independent lower bounds for the
sum of variances $V_{Tot}$. As such we have discussed the virtues
and limits of the proposed framework. Since the latter is based on
ground states evaluation, it does not suffer from the caveats of general
minimization schemes that can be numerically demanding and can get
trapped in local minima. On the other hand it requires the diagonalization
of operators of dimension $M^{2}\times M^{2}$, that for $M$ very
large can be numerically complex. As we have shown the complexity
of the solution may however be drastically reduced when the problem
presents some symmetries and/or the operator involved are simple (e.g.
sparse). \\ While the examples discussed show that the method can
indeed be effective, several questions remain open for future research.
As we have shown in the paper, since the mapping is not unique, other
possibly more effective mappings may be found. The extension of the
method to cases involving unbounded operators and the assessment of
its limits require a thorough analysis. On another level it would
be intriguing to explore the connections, if any, between the framework
proposed and the already existing ones e.g. those based on the joint
numerical range.\\ Finally, while in this paper we have not assessed
the problem, our method can be used for entanglement detection \cite{URHoffmanEntanglementDetection,URGuneEntanglementDetection}
and it would be desirable to apply it to relevant problems in that
area of research.
\begin{acknowledgments}
We gratefully acknowledge funding from the University of Pavia ``Blue
sky'' project - grant n. BSR1718573. P. Giorda would like to thank
Professor R. Demkowicz-Dobrza\'{n}ski, Professor M.G.A. Paris. 
\end{acknowledgments}

\appendix

\appendix
%dummy comment inserted by tex2lyx to ensure that this paragraph is not empty
\numberwithin{equation}{section}

\section{Properties of $H_{Tot}$\label{sec: Appendix Properties of H_Tot}}

In the following we prove point $ii)$ of Proposition \ref{Prop 1}
by construction. To this aim we start by supposing that each $A_{n}$
has a non-degenerate eigenspectrum. This hypothesis is in principle
not necessary but we use it to simplify the notations. We thus notice
that given a state $\ket{\phi}\in\mathcal{H}_{M}\otimes\mathcal{H}_{M}$,
since each operator $H_{n}$ is semidefinite positive one has that
$\left\langle \phi\left|H_{n}\right|\phi\right\rangle =0$ iff $\ket{\phi}\in Ker\left(H_{n}\right)$.
Since we assume that the all $A_{n}$'s have non-degenerate eigenspectrum
one has that $\forall n$ $dim\left[Ker\left(H_{n}\right)\right]=M$
each $Ker\left(H_{n}\right)$ can be written as
\begin{eqnarray}
Ker\left(H_{n}\right) & = & span\left\{ \ket{a_{n,1}}\ket{a_{n,1}},\ket{a_{n,2}}\ket{a_{n,2}},..\right.\nonumber \\
 &  & \left....,\ket{a_{n,M}}\ket{a_{n,M}}\right\} \nonumber \\
\label{eq: Ker Hn eigenbasis}
\end{eqnarray}
 a fact which is easily derived by looking at the form of the generic
$H_{n}$ (\ref{eq: Hn Definition}): the states $\left\{ \ket{a_{n,i}}\ket{a_{n,i}}\right\} _{i=1}^{M}$
are mutually orthogonal, are all eigenstates of $H_{n}$ with zero
eigenvalue and they form an orthonormal basis of $Ker\left(H_{n}\right)$.
The Hamiltonian $H_{Tot}$ has $\varepsilon_{gs}=0$ iff $\cap_{n}Ker\left(H_{n}\right)\neq\oslash$
such that $\ket{\varepsilon_{gs}}\in\cap_{n}Ker\left(H_{n}\right)$
i.e., iff the intersection of the kernels of the $H_{n}$ operators
is not void and the ground state is a common eigenvector of all the
$H_{n}$ with zero energy. In order to derive the general form of
$\ket{\varepsilon_{gs}}$ we start by supposing that $\cap_{n}Ker\left(H_{n}\right)\neq\oslash$
and that there exist $\ket{\varepsilon_{gs}}\in\cap_{n}Ker\left(H_{n}\right)$.
We then focus on on a specific $H_{n}$, say $H_{1}$; since by hypothesis
$\ket{\varepsilon_{gs}}\in Ker\left(H_{1}\right)$ we write the state
in terms of the eigenbasis (\ref{eq: Ker Hn eigenbasis}) of $Ker\left(H_{1}\right)$
\begin{eqnarray*}
\ket{\varepsilon_{gs}} & = & \sum_{i=1}^{M}\alpha_{1,i}\ket{a_{1,i}}\ket{a_{1,i}}
\end{eqnarray*}
Since $\forall i$ one can write $\alpha_{1,i}=\left|\alpha_{1,i}\right|e^{i\phi_{1,i}}$
and reabsorb the phase factors in the definitions of the eigenvectors,
e.g. $\ket{\tilde{a}_{1,i}}=e^{i\phi_{1,i}/2}\ket{a_{1,i}}$ such
that 
\begin{eqnarray*}
\ket{\varepsilon_{gs}} & = & \sum_{i=1}^{M}\left|\alpha_{1,i}\right|\ket{\tilde{a}_{1,i}}\ket{\tilde{a}_{1,i}}
\end{eqnarray*}
In this way the ground state is written in its Schmidt decomposition
in terms of the basis $\left\{ \ket{\tilde{a}_{1,i}}\ket{\tilde{a}_{1,i}}\right\} _{i=1}^{M}$.
Since $\ket{\varepsilon_{gs}}\in\cap_{n}Ker\left(H_{n}\right)$ and
due to the structure (\ref{eq: Ker Hn eigenbasis}) of each $Ker\left(H_{n}\right)$,
the same is true for all $H_{n}$ such that one has
\begin{eqnarray}
\ket{\varepsilon_{gs}}=\sum_{i=1}^{M}\left|\alpha_{1,i}\right|\ket{\tilde{a}_{1,i}}\ket{\tilde{a}_{1,i}} & = & \sum_{i=1}^{M}\left|\alpha_{2,i}\right|\ket{\tilde{a}_{2,i}}\ket{\tilde{a}_{2,i}}=\nonumber \\
= & ... & =\sum_{i=1}^{M}\left|\alpha_{N,i}\right|\ket{\tilde{a}_{N,i}}\ket{\tilde{a}_{N,i}}\nonumber \\
\label{eq: GS HTot in Kernel basis}
\end{eqnarray}
 This result tells us that the ground state must be unique and that
$\forall i,n$ it must be $\left|\alpha_{n,i}\right|=1/\sqrt{M}$.
Indeed, each decomposition of the ground state (\ref{eq: GS HTot in Kernel basis})
represents in principle a \textit{different inequivalent versions}
of the Schmidt decomposition of $\ket{\varepsilon_{gs}}$. But for
a pure bipartite state, if the Schmidt coefficients $\left|\alpha_{n,i}\right|$
are not all degenerate i.e., all equal, than the Schmidt decomposition
is unique up to phase factors \cite{wernerappendix}. Since by hypothesis
$\ket{\varepsilon_{gs}}\in\cap_{n}Ker\left(H_{n}\right)$, in order
for the relation (\ref{eq: GS HTot in Kernel basis}) to be true,
in the first place it must be $\left|\alpha_{n,i}\right|=1/\sqrt{M},\ \forall n,i$.
Therefore if there is a common ground state this must read 
\begin{eqnarray}
\ket{\varepsilon_{gs}}=\frac{1}{\sqrt{M}}\sum_{i=1}^{M}\ket{\tilde{a}_{1,i}}\ket{\tilde{a}_{1,i}} & = & \frac{1}{\sqrt{M}}\sum_{i=1}^{M}\ket{\tilde{a}_{2,i}}\ket{\tilde{a}_{2,i}}=\nonumber \\
= & ... & =\frac{1}{\sqrt{M}}\sum_{i=1}^{M}\ket{\tilde{a}_{N,i}}\ket{\tilde{a}_{N,i}}\nonumber \\
\label{eq: General GS HTot in Kernel basis}
\end{eqnarray}
Now depending on the problem, there may or may not be the possibility
of adjusting the phases $\phi_{i,n}$ in order to have a single ground
state with $\varepsilon_{gs}=0$. In the affirmative case the ground
state of $H_{Tot}$ is unique and it can be written by using the appropriate
phases as $\ket{\varepsilon_{gs}}=\frac{1}{\sqrt{M}}\sum_{i}\ket{\tilde{a}_{n,i}}\ket{\tilde{a}_{n,i}},\ \forall n$.
Form which follows the first part of result $ii)$. It is actually
not important for the next part of the result to determine exactly
the various $\phi_{i,n}$. Indeed, the non-zero state-independent
lower bound $\varepsilon_{1}\left(1-\frac{1}{M}\right)$ can be derived
as follows. If $\varepsilon_{gs}=0$, given the general form of the
ground state derived above (\ref{eq: General GS HTot in Kernel basis})
i.e., that of a maximally entangled one, for any given $\ket{\phi}\in\mathcal{H}_{M}$
one can write 
\begin{eqnarray*}
\ket{\varepsilon_{gs}} & = & \frac{1}{\sqrt{M}}\sum_{i=1}^{M}\ket{a_{n,i}}\ket{a_{n,i}}\\
 & = & \frac{1}{\sqrt{M}}\left(\sum_{i=1}^{M}\ket{\phi_{n,i}}\ket{\phi_{n,i}^{*}}\right)
\end{eqnarray*}
where $\left\{ \ket{\phi_{n,i}}\right\} _{i=1}^{M}$ being mutually
orthonormal and $\ket{\phi}=\ket{\phi_{n,1}}$, while $\forall i\ $$\ket{\phi_{n,i}^{*}}$
is the complex conjugate of $\ket{\phi_{n,i}}$ when the latter is
expressed in the $\left\{ \ket{a_{n,i}}\right\} $ basis. The latest
formula allows to infer that $\max_{\ket{\phi}\in\mathcal{H}_{M}}\left|\bra{\phi}\bra{\phi}\left.\varepsilon_{gs}\right\rangle \right|^{2}=\max_{\ket{\phi}\in\mathcal{H}_{M}}\left|\bra{\phi}\left.\phi^{*}\right\rangle \right|^{2}/M=1/M$;
the maximum being attained by any state $\ket{\phi}=\sum_{i}U_{ji}\ket{a_{n,i}}$
with $U_{ji}\in\mathbb{R}$. Then, if $\left\{ \ket{\varepsilon_{n}}\right\} _{n=0}^{M^{2}-1}$
are the eigenstates of $H_{Tot}$ corresponding to the eigen-energies
$\varepsilon_{0}=\varepsilon_{gs}=0$ and $\varepsilon_{n}>0,\ \forall n=1,..,M^{2}-1$,
one has that $\forall\ket{\phi}\in\mathcal{H}_{M}$ 
\begin{eqnarray*}
\bra{\phi}\bra{\phi}H_{Tot}\ket{\phi}\ket{\phi} & = & \bra{\phi}\bra{\phi}\sum_{n=0}^{M^{2}-1}\varepsilon_{n}\ket{\varepsilon_{n}}\bra{\varepsilon_{n}}\ket{\phi}\ket{\phi}=\\
 & \ge & \varepsilon_{1}\sum_{n=1}^{M^{2}-1}\left|\bra{\phi}\bra{\phi}\left.\varepsilon_{n}\right\rangle \right|^{2}=\\
 & = & \varepsilon_{1}\bra{\phi}\bra{\phi}\left(\mathbb{I}_{M^{2}}-\ket{\varepsilon_{gs}}\bra{\varepsilon_{gs}}\right)\ket{\phi}\ket{\phi}=\\
 & = & \varepsilon_{1}\left(1-\left|\bra{\phi}\bra{\phi}\left.\varepsilon_{gs}\right\rangle \right|^{2}\right)
\end{eqnarray*}
Since 
\begin{eqnarray*}
\min_{\ket{\phi}\in\mathcal{H}_{M}}\varepsilon_{1}\left(1-\left|\bra{\phi}\bra{\phi}\left.\varepsilon_{gs}\right\rangle \right|^{2}\right) & = & \varepsilon_{1}\left(1-\frac{1}{M}\right)
\end{eqnarray*}
one has that $\forall\ket{\phi}\in\mathcal{H}_{M}$
\begin{eqnarray*}
V_{Tot}\left(\ket{\phi}\right)=\bra{\phi}\bra{\phi}H_{Tot}\ket{\phi}\ket{\phi} & \ge & \varepsilon_{1}\left(1-\frac{1}{M}\right)>0
\end{eqnarray*}
which is the second part of result $ii)$.

\section{Proof of proposition 4\label{Appendix: Proposition 4}}

We now prove the results of Proposition \ref{Prop 4}. We begin with
$i)$. Suppose $\alpha>0$, the proof is based on the analysis of
the Hamiltonian 
\begin{eqnarray*}
H_{Tot,n}^{\alpha} & = & \sum_{m\neq n}H_{m}+\frac{\left(A_{n}^{\alpha}\right)^{2}\otimes\mathbb{I}+\mathbb{I}\otimes\left(A_{n}^{\alpha}\right)^{2}}{2}=\\
 & = & H_{Tot,n}-\alpha\left(A_{n}\otimes\mathbb{I}+\mathbb{I}\otimes A_{n}\right)+\alpha^{2}\mathbb{I}
\end{eqnarray*}
where $H_{Tot,n}=\sum_{m\neq n}H_{m}+\frac{A_{n}^{2}\otimes\mathbb{I}+\mathbb{I}\otimes A_{n}^{2}}{2}$
is defined as above. If $\ket{\varepsilon_{gs,n}^{\alpha}}$ is a
ground state of $H_{Tot,n}^{\alpha}$ then $\ket{\varepsilon_{gs,n}^{-\alpha}}=U\otimes U\ket{\varepsilon_{gs,n}^{\alpha}}$
must be a ground state of $H_{Tot,n}^{-\alpha}$. Indeed, on one hand,
due to the symmetry properies of $\sum_{m\neq n}H_{m}$ that extend
to $H_{Tot,n}$, it holds $\bra{\varepsilon_{gs,n}^{-\alpha}}H_{Tot,n}\ket{\varepsilon_{gs,n}^{-\alpha}}=\bra{\varepsilon_{gs,n}^{\alpha}}H_{Tot,n}\ket{\varepsilon_{gs,n}^{\alpha}}$.
Furthermore, due to the action of $U$ on $A_{n}$ 
\begin{eqnarray*}
\bra{\varepsilon_{gs,n}^{-\alpha}}\left(A_{n}\otimes\mathbb{I}+\mathbb{I}\otimes A_{n}\right)\ket{\varepsilon_{gs,n}^{-\alpha}} & =\\
=-\bra{\varepsilon_{gs,n}^{\alpha}}\left(A_{n}\otimes\mathbb{I}+\mathbb{I}\otimes A_{n}\right)\ket{\varepsilon_{gs,n}^{\alpha}}
\end{eqnarray*}
such that 
\begin{eqnarray*}
\varepsilon_{gs,n}^{-\alpha}=\bra{\varepsilon_{gs,n}^{-\alpha}}H_{Tot,n}^{-\alpha}\ket{\varepsilon_{gs,n}^{-\alpha}} & = & \bra{\varepsilon_{gs,n}^{\alpha}}H_{Tot,n}^{\alpha}\ket{\varepsilon_{gs,n}^{\alpha}}=\varepsilon_{gs,n}^{\alpha}
\end{eqnarray*}
Then $ii)$ simply follows from the fact that 
\begin{eqnarray*}
\bra{\varepsilon_{gs,n}^{0}}\left(A_{n}\otimes\mathbb{I}+\mathbb{I}\otimes A_{n}\right)\ket{\varepsilon_{gs,n}^{0}} & =\\
-\bra{\varepsilon_{gs,n}^{0}}\left(A_{n}\otimes\mathbb{I}+\mathbb{I}\otimes A_{n}\right)\ket{\varepsilon_{gs,n}^{0}}
\end{eqnarray*}
and there for to first order in $\delta\alpha\ll1$ one has $\varepsilon_{gs,n}^{\delta\alpha}=\varepsilon_{gs,n}+\delta\alpha^{2}\ge\varepsilon_{gs,n}^{0}$.

\section{Symmetries for spin hamiltonian\label{sec: Appendix: symmetry for spin Ham}}

In this Appendix we detail the symmetries property of $H_{Tot}$ (\ref{eq: HTot JX JZ})
defined in terms of the two spin operators $J_{X},J_{Z}$. One has
that 
\begin{eqnarray*}
e^{-i\theta J_{Y}}J_{Z}e^{i\theta J_{Y}} & = & \cos\theta J_{Z}+\sin\theta J_{X}\\
e^{-i\theta J_{Y}}J_{X}e^{i\theta J_{Y}} & = & -\sin\theta J_{Z}+\cos\theta J_{X}
\end{eqnarray*}
then, given $U_{YY}=e^{-i\theta J_{Y}}\otimes e^{-i\theta J_{Y}}$
\begin{eqnarray*}
U_{YY}J_{Z}\otimes J_{Z}U_{YY}^{\dagger} & = & \cos^{2}\theta J_{Z}\otimes J_{Z}+\sin^{2}\theta J_{X}\otimes J_{X}+\\
 & + & \sin\theta\cos\theta\left(J_{Z}\otimes J_{X}+J_{X}\otimes J_{Z}\right)\\
U_{YY}J_{X}\otimes J_{X}U_{YY}^{\dagger} & = & \sin^{2}\theta J_{Z}\otimes J_{Z}+\cos^{2}\theta J_{X}\otimes J_{X}+\\
 & - & \sin\theta\cos\theta\left(J_{Z}\otimes J_{X}+J_{X}\otimes J_{Z}\right)
\end{eqnarray*}
such that 
\begin{eqnarray*}
U_{YY}\left(J_{Z}\otimes J_{Z}+J_{X}\otimes J_{X}\right)U_{YY}^{\dagger} & = & \left(J_{Z}\otimes J_{Z}+J_{X}\otimes J_{X}\right)
\end{eqnarray*}
Furthermore by using the Casimir relation $j(j+1)\mathbb{I=}J_{X}^{2}+J_{Y}^{2}+J_{Z}^{2}$
the Hamiltonian $H_{Tot}$ can be expressed as 
\begin{eqnarray*}
H_{Tot} & = & \frac{\left(J_{Z}^{2}+J_{X}^{2}\right)\otimes\mathbb{I}+\mathbb{I}\otimes\left(J_{Z}^{2}+J_{X}^{2}\right)}{2}+\\
 & - & \left(J_{Z}\otimes J_{Z}+J_{X}\otimes J_{X}\right)=\\
 & = & j(j+1)\mathbb{I}\otimes\mathbb{I}-\frac{J_{Y}^{2}\otimes\mathbb{I}+\mathbb{I}\otimes J_{Y}^{2}}{2}+\\
 & - & \left(J_{Z}\otimes J_{Z}+J_{X}\otimes J_{X}\right)
\end{eqnarray*}
such that 
\begin{eqnarray*}
U_{YY}H_{Tot}U_{YY}^{\dagger} & = & H_{Tot}
\end{eqnarray*}
therefore $\forall\ket{\phi}\in\mathcal{H}_{M}$ if 
\begin{eqnarray*}
\bra{\phi}\bra{\phi}H_{Tot}\ket{\phi}\ket{\phi} & = & c(\phi)
\end{eqnarray*}
then one has also that 
\begin{eqnarray*}
\bra{\phi}\bra{\phi}H_{Tot}\ket{\phi}\ket{\phi} & = & \bra{\phi}\bra{\phi}U_{YY}H_{Tot}U_{YY}^{\dagger}\ket{\phi}\ket{\phi}=\\
 & = & \bra{\phi_{\theta}}\bra{\phi_{\theta}}H_{Tot}\ket{\phi_{\theta}}\ket{\phi_{\theta}}\\
 & = & c(\phi)
\end{eqnarray*}
Therefore one has a certain degrees of freedom in choosing $\ket{\phi}$
since all states $\ket{\phi_{\theta}}=e^{i\theta J_{Y}}\ket{\phi},\forall\theta\in\mathbb{R}$
will have the same variance $c(\phi)$ . Now 
\begin{eqnarray*}
\bra{\phi_{\theta}}J_{x}\ket{\phi_{\theta}} & = & -\sin\theta\bra{\phi}J_{z}\ket{\phi}+\cos\theta\bra{\phi}J_{x}\ket{\phi}
\end{eqnarray*}
Suppose now $\ket{\phi}$ is a state which minimizes $V_{XZ}$. One
can always choose for example $\theta$ such that 
\begin{eqnarray*}
\bra{\phi_{\theta}}J_{x}\ket{\phi_{\theta}} & = & 0
\end{eqnarray*}
i.e., we can choose $\theta$ by setting 
\begin{eqnarray*}
\sin\theta\bra{\phi}J_{z}\ket{\phi} & = & +\cos\theta\bra{\phi}J_{x}\ket{\phi}\\
\tan\theta & = & \frac{\bra{\phi}J_{x}\ket{\phi}}{\bra{\phi}J_{z}\ket{\phi}}\\
\theta & = & \arctan\left(\frac{\bra{\phi}J_{x}\ket{\phi}}{\bra{\phi}J_{z}\ket{\phi}}\right)
\end{eqnarray*}
Therefore even if $\theta$ is unknown we can find the lower bound
of $V_{XZ}$ by finding the ground state of the Hamiltonian 
\begin{eqnarray*}
H_{Tot,X} & = & \frac{\left(J_{Z}^{2}+J_{X}^{2}\right)\otimes\mathbb{I}+\mathbb{I}\otimes\left(J_{Z}^{2}+J_{X}^{2}\right)}{2}-J_{Z}\otimes J_{Z}
\end{eqnarray*}
Indeed $\varepsilon_{Tot,X}^{0}$ will give a lower bound $\forall\ket{\phi}\in S_{X}^{0}$
among which there will be the $\ket{\phi_{\theta}}$ which minimizes
$V_{XZ}$. Then $\forall\ket{\psi}\in\mathcal{H}_{M}$ one has 
\begin{eqnarray*}
V_{XZ}(\ket{\psi}) & \ge & V_{XZ}(\ket{\phi_{\theta}})=\\
 & \ge & \varepsilon_{gs,X}^{0}
\end{eqnarray*}

\section{Planar spin squeezing\label{Appendix: Planar Spin Squeezing}}

In this Appendix we show how from the knowledge of $\ket{\lambda_{Max}^{+}},\ket{\lambda_{Max}^{-}}$
one can obtain a state $\ket{\psi_{sat}}=\ket{\theta_{m}}$ that can
in principle realized in the laboratory and that approximately saturates
the bound for planar spin squeezing. For fixed $j$ one can study
the profile of $\ket{\lambda_{Max}^{+}},\ket{\lambda_{Max}^{-}}$;
a feature that holds for all analyzed values of $j$ is that the profile
is peaked at $m_{z}=j$ and $m_{z}=-j$ respectively, and such that
only the states with $m_{z}=-j+2k$ have non-zero amplitudes. These
numerical findings will lead us in the search for states $\ket{\psi_{sat}}$
that on one hand are a good approximations of $\ket{\lambda_{Max}^{+}},\ket{\lambda_{Max}^{-}}$
and on the other hand are in principle obtainable in the laboratory.\\
We start by considering the relation (\ref{eq: VXZ}) which, over
the set of eigenstates of $J_{Z}$, is minimized by $\ket{j,\pm j}$
and for such states $\Delta^{2}J_{Z}=0$ and $V_{XZ}=\Delta^{2}J_{X}=j/2$.
In order to obtain a lower bound for $V_{XZ}$ smaller than $j/2$,
one can imagine to start from the state $\ket{j,j}$ for example and
to modify it in such a way that $\Delta^{2}J_{Z}\gtrapprox0$ is little
changed and at the same time $\Delta^{2}J_{X}$ is considerably reduced.
This heuristic reasoning suggests the strategy of searching for an
operator $G$ such that $\ket{\theta}=\exp\left(-i\theta G\right)\ket{j,j}_{Z}$
$\theta\in\mathbb{R}$ is the state required. If one analyses $V_{XZ}^{\theta}=V_{XZ}(\ket{\theta})$
and in particular its first order variation $\partial_{\theta}V_{XZ}^{\theta}$
in $\theta=0$ one has 
\begin{eqnarray*}
\partial_{\theta}\left[\Delta^{2}J_{Z}\left(\theta\right)\right]_{\theta=0} & = & 0\\
\partial_{\theta}\left[\Delta^{2}J_{X}\left(\theta\right)\right]_{\theta=0} & = & \left\langle j,j\left|\left[J_{X}^{2},G\right]\right|j,j\right\rangle +\\
 & - & \left\langle j,j\left|J_{X}\right|j,j\right\rangle \left\langle j,j\left|\left[J_{x},G\right]\right|j,j\right\rangle =\\
 & = & 2\ Im\left[\left\langle j,j-2\left|G\right|j,j\right\rangle \right]
\end{eqnarray*}
The previous relations thus leads to consider operators for which
$\bra{j,j-2}G\ket{j,j}_{z}\neq0$. The above reasoning heuristically
leads to analyze the action of the two-axis squeezing operator 
\begin{eqnarray*}
H_{TAS} & = & -i\left(J_{+}^{2}-J_{-}^{2}\right)
\end{eqnarray*}
which is known to have the property of squeezing along the $X$ axis
and simultaneously anti-squeezed along the $Y$ axis. This latter
property is consistent with the relation (\ref{eq: V_XYZ}) where
it can be seen that any attempt to squeeze the sum $V_{XZ}$ implies
the enhancement of $\Delta^{2}J_{Y}$. The action of the operator
$U=\exp\left(-i\theta H_{TAS}\right)$ on $\ket{j,j}$ is not known
in an analytical form, however it has the desirable property of populating
only the basis states $\ket{j,j-2k}$ thus reproducing one of the
features of the states $\ket{\lambda_{Max}^{+}},\ket{\lambda_{Max}^{-}}$
discussed above. \\Following the previous discussion the goal now
is to find the optimal value $\theta_{m}$ of the squeezing parameter
$\theta$ such that the state $\ket{\psi_{sat}}=\ket{\theta_{m}}=\exp\left(-i\theta_{m}H_{TAS}\right)\ket{j,j}_{Z}$
approximately saturates the lower bound for $V_{XZ}$. This in principle
requires for each $j$ the numerical search for the optimal value
of $\theta_{m}=\theta_{m}(j)$ for which the minimum of $V_{XZ}^{\theta}$
is attained. We now show how to analytically estimate the optimal
value of $\theta_{m}$. As anticipated in the main text we resort
to the Holstein-Primakoff (HP) transformation that allows to map the
spin operators to harmonic oscillators ones. Indeed as shown in \cite{HolsteinPrimakoff,EmaryDickeHP,WerAngularMomentum}
one can write the spin operators in terms of the bosonic creation
and annihilation operators $a,a^{\dagger}$. 
\begin{eqnarray*}
J_{+} & = & \sqrt{2j}a^{\dagger}\sqrt{1-\frac{a^{\dagger}a}{2j}}\\
J_{.-} & = & \sqrt{2j}\sqrt{1-\frac{a^{\dagger}a}{2j}}a\\
J_{z} & = & a^{\dagger}a-j
\end{eqnarray*}
such that for states with average number of bosons $\left\langle \hat{n}\right\rangle =\left\langle a^{\dagger}a\right\rangle \ll2j$
one has that $J_{+}=\sqrt{2j}a^{\dagger},\ J_{-}=\sqrt{2j}a$. With
this transformation the sum of variances (\ref{eq: VXZ}) can be written
as 
\begin{eqnarray}
V_{XZ}^{bos} & = & \Delta^{2}\hat{n}+j\Delta^{2}\hat{x}\label{eq: VXZBoson}
\end{eqnarray}
where: $\hat{n}$ is the number operator; $ $ $\hat{x}=(a+a^{\dagger})/\sqrt{2}$
is the position operator and $\Delta^{2}J_{Z}\rightarrow\Delta^{2}\hat{n}$$\Delta^{2}J_{X}\rightarrow j\Delta^{2}\hat{x}$.
Within the Holstein Primakoff representation the spin state $\ket{j,j}$
is mapped into the vacuum $\ket{0}$. In general there is no such
mapping between the squeezed state $\ket{\theta}$ and the corresponding
single mode squeezed vacuum state that reads \cite{ParisGaussianStatesInQInfo}
\begin{eqnarray*}
\ket{\xi} & = & \exp\left\{ \frac{1}{2}\left[\xi\left(a^{\dagger}\right)^{2}-\xi^{*}a^{2}\right]\right\} \ket{0}
\end{eqnarray*}
with $\xi=re^{-i\phi}$ the squeezing parameter. However, this state
is the ``natural'' counterpart of $\ket{\theta}$ in the search
for a minimum of $V_{XZ}^{bos}$ . Within the HP framework two-axis
squeezing operator transforms into the single-mode squeezing operator
\begin{eqnarray*}
e^{-i\theta H_{TAS}} & = & \exp\left[-\theta\left(J_{+}^{2}-J_{-}^{2}\right)\right]=\\
 & \approx & \exp\left\{ -\theta2j\left[\left(a^{\dagger}\right)^{2}-a^{2}\right]\right\} 
\end{eqnarray*}
such that if we now choose $\xi=-4j\theta$ we can bridge the spin
and the bosonic version of $V_{XZ}$. With these assumptions $V_{XZ}^{bos}$
reads 
\begin{eqnarray}
V_{XZ}^{boson}\left(\theta\right) & = & 2\sinh^{2}(4j\theta)\cosh^{2}(4j\theta)+j\frac{\exp\left(8j\theta\right)}{2}\nonumber \\
\label{eq: V_XZ bosons}
\end{eqnarray}
The minimization of the latter expression with respect to $\theta$
provides a single real solution that for $j\gg1$ can be written as
\begin{eqnarray}
\theta_{m} & = & -\frac{\log2+\log j}{24\ j}+o\left(1/j^{2}\right)\label{eq: theta min HP-1}
\end{eqnarray}
such that for $j\gg1$ one finds 
\begin{eqnarray*}
V_{XZ}^{boson}\left(\theta_{m}\right) & \approx & 0.595275\ j^{2/3}
\end{eqnarray*}
We notice that the scaling obtained in the HP framework \textit{coincides}
with the dominant part of (\ref{eq: VXZ He scaling}) for large $j$.
The found approximate solution $\theta_{m}$ can now be used to compute
the bound for the spin version of the sum of variances (\ref{eq: VXZ})
i.e., $V_{XZ}(\ket{\theta_{m}})$. The consequences of this results
are described in the Main text. 

\section{The bosonic case: gaussian states\label{sec: Appendix: bosonic-case xn}}

The generic pure Gaussian state reads 
\begin{eqnarray*}
D(\alpha)S(\xi)\ket{0} & = & \ket{\alpha,\xi}
\end{eqnarray*}
The variance of $x$ for such states can thus be written as 
\begin{eqnarray*}
\Delta_{\ket{\alpha,\xi}}^{2}x & = & \left\langle \alpha,\xi\left|x^{2}\right|\alpha,\xi\right\rangle -\left\langle \alpha,\xi\left|x\right|\alpha,\xi\right\rangle ^{2}=\\
 & = & \left\langle \xi\left|D^{\dagger}(\alpha)xD(\alpha)D^{\dagger}(\alpha)xD(\alpha)\right|\xi\right\rangle +\\
 & - & \left\langle \xi\left|D^{\dagger}(\alpha)xD(\alpha)\right|\xi\right\rangle ^{2}=\\
 & = & \Delta_{\ket{\xi}}^{2}x_{\alpha}
\end{eqnarray*}
with $x_{\alpha}=D^{\dagger}(\alpha)xD(\alpha)=x+2Re\left[\alpha\right]\mathbb{I}$.
Since $\Delta^{2}\left[A+c\mathbb{I}\right]=\Delta^{2}A$ one has
that $\Delta_{\ket{\alpha,\xi}}^{2}\hat{x}=\Delta_{\ket{\xi}}^{2}\hat{x}$.
i.e., the displacement does not change the variance of $x$, since
it only changes its average value. We now evaluate the variance of
$\hat{n}$ and find $\Delta_{\ket{\alpha,\xi}}^{2}\hat{n}=\Delta_{\ket{\xi}}^{2}\hat{n}_{\alpha}$
with $n_{\alpha}=n+a^{\dagger}\alpha+a\alpha^{*}+\left|\alpha\right|^{2}$.
The constant $\left|\alpha\right|^{2}$ again can be dropped and one
is left with such that
\begin{eqnarray*}
\Delta_{\ket{\xi}}^{2}\hat{n}_{\alpha} & = & \Delta_{\ket{\xi}}^{2}\hat{n}+2\left|\alpha\right|^{2}\Delta_{\ket{\xi}}^{2}\hat{x}_{\arg\alpha}+\\
 & + & \left|\alpha\right|\left[\left\langle \hat{n}\hat{x}_{\arg\alpha}\right\rangle +\left\langle \hat{x}_{\arg\alpha}\hat{n}\right\rangle -2\left\langle \hat{n}\right\rangle \left\langle \hat{x}_{\arg\alpha}\right\rangle \right]
\end{eqnarray*}
where $\hat{x}_{\arg\alpha}=(ae^{i\ \arg\alpha}+a^{\dagger}e^{-i\ \arg\alpha})/\sqrt{2}$.
Since the averages are taken for the state $\ket{\xi}$, for the property
of the latter one has $\left\langle \hat{n}\hat{x}_{\arg\alpha}\right\rangle =\left\langle \hat{x}_{\arg\alpha}\hat{n}\right\rangle =\left\langle \hat{x}_{\arg\alpha}\right\rangle =0$.
Overall the previous results show that, $\forall\alpha,\xi$ i.e.,
for all pure Gaussian states $\ket{\alpha,\xi}$ 
\begin{eqnarray*}
\Delta_{\ket{\alpha,\xi}}^{2}n+\Delta_{\ket{\alpha,\xi}}^{2}x & = & \Delta_{\ket{\xi}}^{2}n+2\left|\alpha\right|^{2}\Delta_{\ket{\xi}}^{2}x_{\arg\alpha}+\Delta_{\ket{\xi}}^{2}x\ge\\
 & \ge & \Delta_{\ket{\xi}}^{2}n+\Delta_{\ket{\xi}}^{2}x
\end{eqnarray*}
such that the minimum of $V_{xn}$ over the set of Gaussian state
is given by the squeezed vacuum state $\ket{\xi_{m}}$ that minimizes
$\Delta_{\ket{\xi}}^{2}n+\Delta_{\ket{\xi}}^{2}x$. 
\end{document}